\newtheorem{obs}{Observation}
\begin{document}

\mainmatter

\title{Improved upper and lower bounds for the point placement problem}

\titlerunning{Point placement problem}


\author{Md. Shafiul Alam 
\and Asish Mukhopadhyay \thanks{Research supported by an NSERC discovery grant awarded to this author.}}

\authorrunning{Md. Shafiul Alam and Asish Mukhopadhyay}
\institute{School of Computer Science, University of Windsor,\\401 Sunset Avenue, Windsor, ON, N9B 3P4, Canada\\ 
}

%
%


\maketitle

\begin{abstract}
The point placement problem is to determine the positions of a set of $n$ distinct points, $P = \{p_1, p_2, p_3, \ldots, p_n\}$, on a line uniquely, up to translation and reflection, from the fewest possible distance queries between pairs of points. Each distance query corresponds to an edge in a graph, called point placement graph ($ppg$), whose vertex set is $P$. The uniqueness requirement of the placement translates to line rigidity of the $ppg$. In this paper we show how to construct in 2 rounds a line rigid point placement graph of size $9n/7 + O(1)$. This improves the result reported in ~\cite{DBLP:conf/wabi/ChinLSY07} for 5-cycles. We also improve the lower bound on 2-round algorithms from $17n/16$ ~\cite{DBLP:conf/wabi/ChinLSY07} to $9n/8$.
\end{abstract}


\section{Introduction}  
  
 
Let $P = \{p_1, p_2, . . . , p_n\}$ be a set of $n$ distinct points   
on a line $L$. In this paper, we address the problem of determining a unique placement   
(up to translation and reflection) of the $p_i$'s   
on $L$, by querying distances   
between some pairs of points $p_i$ and $p_j$, $1 \leq i, j \leq n$.
The resulting queries can be represented by   
a \emph{point placement graph} ($ppg$, for short), $G = (P, E)$, where each edge $e$ in $E$   
joins a pair of points $p_i$ and $p_j$   
in $P$ if the distance between these two points on $L$ is known and the length of $e$, $|e|$, is the   
distance between the corresponding pair of points. (Note the dual use of $p_i$ to denote a   
point on $L$ as well as a vertex of $G$.)
We will say that $G$ is \emph{line rigid} or just \emph{rigid} when there is a   
unique placement for $P$. Thus, the original problem reduces to the construction  of a line rigid $ppg$, $G$.

Early research on this problem was reported in~\cite{Redstone:2000:ASP:648257.752909,journals/dam/Mumey00}.   
In this paper, our first principal reference is \cite{DBLP:journals/dam/Damaschke03}, where it was shown that   
 jewel and $K_{2,3}$ are both line rigid, as also   
how to build large rigid graphs of density 8/5 (this is an asymptotic measure of the   
number of edges per point as the number of points go to infinity) out of the jewel. In a subsequent   
paper, Damaschke \cite{DBLP:journals/dam/Damaschke06} proposed a randomized 2-round strategy that needs   
$(1+o(1))n$ distance queries   
with high probability and also showed that this is not possible with 2-round deterministic strategies. He also reported the following result:

\begin{obs}{\label{obs:one}}   
At most two equal length edges that are collinear with a line $L$ can be incident to a point $p$ on $L$.
\end{obs}
 
Our second principal reference is the   
work of~\cite{DBLP:conf/wabi/ChinLSY07} who improved many of the results of \cite{DBLP:journals/dam/Damaschke03}. Their   
principal contributions are the 3-round construction of rigid graphs of density 5/4 from 6-cycles and a lower bound on the   
number of queries necessary in any 2-round algorithm. They also introduced the idea of a layer graph which is useful in finding the conditions for rigidity of a \textit{ppg} and proved the following result about it:

\begin{theorem}\label{lr=lgtheorem:label}     
A $ppg$ G is line rigid iff it cannot be drawn as a layer graph.     
\end{theorem}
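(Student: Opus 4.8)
The plan is to establish both directions of the equivalence by reasoning about when the linear system defining a placement admits a two-dimensional deformation. First I would make precise the notion of rigidity in algebraic terms: a placement of $P$ on $L$ assigns a real coordinate $x(p_i)$ to each vertex, and each edge $e = p_ip_j$ of length $|e|$ imposes the constraint $|x(p_i) - x(p_j)| = |e|$, i.e. $x(p_i) - x(p_j) = \pm |e|$ with a fixed sign pattern once a drawing is chosen. The graph $G$ is line rigid precisely when, for the given edge lengths, the only placements consistent with all these constraints are related by translation and reflection. I would then observe that a layer graph drawing is exactly a certificate of \emph{non}-rigidity: in a layer graph each edge is assigned a direction ($\mathbf{x}$ or $\mathbf{y}$) with the correct length, property P3 forces genuine two-dimensional extent, and property P4 guarantees that folding the $\mathbf{y}$-edges onto $L$ (to the left or to the right independently) produces distinct collinear placements that are not congruent.

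For the ``if'' direction (if $G$ can be drawn as a layer graph, then $G$ is not line rigid) I would argue as follows. Given a layer graph drawing, project every vertex onto $L$ along the $\mathbf{y}$-direction; this gives a placement in which the $\mathbf{x}$-edges already have their correct lengths and the $\mathbf{y}$-edges have length $0$ in the projection but can be "unfolded." More carefully, I would use P4: fold each connected component of $\mathbf{y}$-edges down onto $L$ by rotating about a vertex on $L$, choosing the rotation direction (left or right) independently for at least one component to obtain two placements that differ by more than a global reflection/translation. Since P3 guarantees at least one $\mathbf{y}$-edge exists and at least one $\mathbf{x}$-edge exists, these two foldings are genuinely different collinear placements realizing the same edge lengths, so $G$ is not rigid.

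For the ``only if'' direction (if $G$ is not line rigid, then $G$ can be drawn as a layer graph) I would start from two non-congruent placements $\pi_1, \pi_2$ of $P$ realizing the same edge lengths. For each edge $e = p_ip_j$, the signed difference $\pi_1(p_i)-\pi_1(p_j)$ equals $\pm(\pi_2(p_i)-\pi_2(p_j))$; call $e$ \emph{agreeing} if the signs match and \emph{flipped} otherwise. The edges where the two placements "locally agree" should become the $\mathbf{x}$-edges and the rest the $\mathbf{y}$-edges; I would build the two-dimensional drawing by placing each vertex at $\bigl(\tfrac{1}{2}(\pi_1(p_i)+\pi_2(p_i)),\ \tfrac{1}{2}(\pi_1(p_i)-\pi_2(p_i))\bigr)$ after a suitable normalization (translating and reflecting $\pi_2$ so that the two placements share at least one point and are as aligned as possible). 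Then an agreeing edge has equal $\mathbf{y}$-coordinates at its endpoints, hence is horizontal with the right length; a flipped edge has equal $\mathbf{x}$-coordinates, hence is vertical with the right length — giving P1 and P2. Non-congruence of $\pi_1,\pi_2$ forces both coordinate spreads to be nonzero after normalization, giving P3. Folding this drawing back onto $L$ recovers exactly $\pi_1$ and $\pi_2$ (and their reflections), which are distinct, so no two vertices coincide under folding, giving P4.

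The main obstacle will be the ``only if'' direction, specifically the normalization: given two arbitrary non-congruent placements, one must argue that after applying translations and a reflection to one of them, the constructed drawing has genuine two-dimensional extent (so that we do not accidentally produce a degenerate "layer graph" that is just a line, violating P3) and that the folding operation is well defined and injective on vertices. Handling the interaction between connected components of the agreeing-edge subgraph and the flipped-edge subgraph — and checking that P4 holds globally rather than just component-by-component — is the delicate part; the rest is bookkeeping about signed coordinate differences.
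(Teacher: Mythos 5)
This theorem is not proved in the paper at all: it is quoted verbatim from Chin \emph{et al.}\ \cite{DBLP:conf/wabi/ChinLSY07}, and the present authors use it only as a black box (indeed, even the definition of a layer graph is suppressed in the published text). Your proposal is, in substance, the standard proof of that cited result, and it is essentially correct. The key construction in your ``only if'' direction --- placing $p_i$ at $\bigl(\tfrac{1}{2}(\pi_1(p_i)+\pi_2(p_i)),\,\tfrac{1}{2}(\pi_1(p_i)-\pi_2(p_i))\bigr)$ so that sign-agreeing edges become horizontal and sign-flipped edges become vertical, each with the correct length --- is exactly the right idea, and the ``normalization'' you worry about is actually unnecessary: for a \emph{connected} $ppg$, if every edge agreed then $\pi_2-\pi_1$ would be constant along every path, making the placements translates of one another, and if every edge flipped then $\pi_1+\pi_2$ would be constant, making them reflections; either way contradicting non-congruence, so P3 holds automatically. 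P4 then follows because the two global foldings $x+y$ and $x-y$ recover $\pi_1$ and $\pi_2$, which are placements of $n$ distinct points. Two small corrections: in the ``if'' direction the folding sanctioned by P4 is a single global rotation (left or right) of the whole drawing, not an independent choice per component of $\mathbf{y}$-edges --- the two placements $x+y$ and $x-y$ are non-congruent precisely because P3 forces both a horizontal and a vertical edge to exist, and P4 guarantees they are legitimate placements of distinct points; and your argument (like the theorem itself) silently requires $G$ to be connected and free of isolated vertices, since a disconnected or trivial $ppg$ can fail to be line rigid while admitting no layer-graph drawing (e.g.\ a single edge plus an isolated vertex violates P3). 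With those caveats your argument is sound and coincides with the proof in the reference.
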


In ~\cite{am-anewalg} we proposed a 2-round algorithm that query $4n/3 + O(1)$ edges to construct line rigid $ppg$ on $n$ points using 6:6 jewels  as the basic components. In this paper, we propose a 2-round algorithm that queries $9n/7 + O(1)$ edges to construct a line rigid $ppg$ on $n$ points, using 3 paths of degree two nodes of length 2 each with a common vertex as the basic component, bettering 
a result of~\cite{DBLP:conf/wabi/ChinLSY07} that uses 5-cycles. More significantly, we improve their lower bound on any 2-round algorithm from $17n/16$ to $9n/8$.

\section{A Two Round Algorithm}

We shall use 3 paths $p_1q_1r_1s$, $p_2q_2r_2s$ and $p_3q_3r_3s$ of degree 2 nodes of length 2 attached to a node $s$ of degree 3 as the basic component for the point placement (Fig.~\ref{3pathfig:label}).
 Other ends $p_1$, $p_2$ and $p_3$ of the 3 paths are made line rigid in the first round. 
We shall make the remaining 7 points line rigid in the second round. We find a set of sufficient conditions that make the component line rigid by preventing its drawing as a \emph{layer graph} (Theorem~\ref{lr=lgtheorem:label}).  
We shall call the component as 3-path and the corresponding \textit{ppg} as 3-path \textit{ppg}. To find the rigidity conditions we consider $(p_1, q_1, r_1, s, r_2, q_2, p_2)$ as a 7-cycle. We shall find conditions that will make the 7-cycle line rigid. Then $s$ will be unambiguous. Also $p_1$, $p_2$ and $p_3$ are fixed in the first round. Consequently, the distance between  $p_3$ and $s$ will be fixed. So, we can consider $(p_3, q_3, r_3, s)$ as a 4-cycle. We shall find the condition for rigidity of this 4-cycle. Then the union of these two sets of conditions will comprise the set of rigidity conditions for the whole 3-path \textit{ppg}.

\begin{figure}[!h] 
\centering 
\includegraphics[scale=0.7]{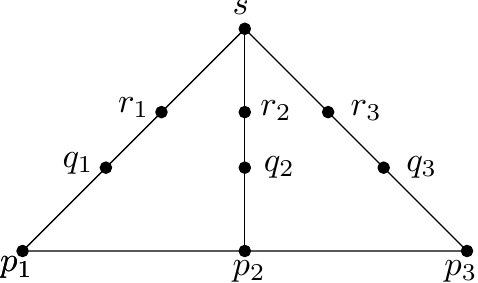} 
\caption{The 3-path basic component.}
\label{3pathfig:label} 
\end{figure}

We shall attach all the basic components to triplets of points among some constant number of line rigid (in the first round) points $p_i$. Then for each component there will be extra 7 points and 9 edges. Thus, the density will be $O(9/7)$. 

We shall not query the lengths of the edges $q_1r_1$, $q_2r_2$ and $q_3r_3$ in the first round. We shall query them in the second round. So, we shall find a set of sufficient conditions for rigidity for the basic component that does not involve these edges. Then we can satisfy all the rigidity conditions irrespective of the lengths of these edges which will be reported in the second round.

The layer graphs of the 7-cycle $(p_1, q_1, r_1, s, r_2, q_2, p_2)$ can be grouped into 6 groups based on the number of edges on each side (Fig.~\ref{layergraph7cycle:label}). When different configurations of the chain $p_3q_3r_3s$ are attached to them, the total number of layer graphs for the 3-path component becomes 42. From them, by Theorem~\ref{lr=lgtheorem:label}, we get the following 42 conditions for rigidity of the 3-path component:

\begin{figure}[!h] 
\centering 
\includegraphics[scale=0.6]{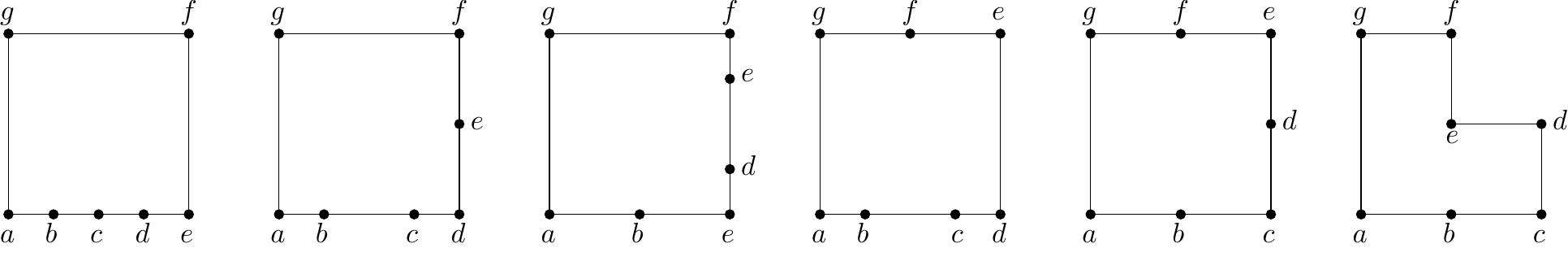} 
\caption{Groups of layer graphs for 7-cycle.}
\label{layergraph7cycle:label} 
\end{figure}

\begin{enumerate}

\item $|p_1p_2| \ne |q_2r_2|$, $|p_1p_2| \ne |q_1r_1|$,
$|p_2q_2| \ne |r_2s|$, $|p_1q_1| \ne |r_1s|$,
$|q_2r_2| \ne |r_1s|$, $|q_1r_1| \ne |r_2s|$,
$|p_1q_1| \ne |p_2q_2|$.

\item $||p_1p_2| \pm |p_2q_2|| \ne |r_2s|$,  
$||p_2q_2| \pm |q_2r_2|| \ne |r_1s|$, 
$||p_1p_2| \pm |p_1q_1| \pm |p_2q_2|| \ne |r_1s|$,  
$|p_1q_1| \ne ||r_1s| \pm |r_2s||$, 
$|p_1p_2| \ne ||q_1r_1| \pm |r_1s||$,  
$||p_1q_1| \pm |q_1r_1|| \ne |p_2q_2|$,
$||p_1q_1| \pm |p_1p_2|| \ne |q_2r_2|$.

\item $||p_1p_2| \pm |p_1q_1|| \ne |r_1s|$,
$||p_1q_1| \pm |q_1r_1|| \ne |r_2s|$, 
$||p_1p_2| \pm |p_1q_1| \pm |p_2q_2|| \ne |r_2s|$,  
$|p_2q_2| \ne ||r_1s| \pm |r_2s||$, 
$|p_1p_2| \ne ||q_2r_2| \pm |r_2s||$,  
$||p_2q_2| \pm |q_2r_2|| \ne |p_2q_2|$,
$||p_2q_2| \pm |p_1p_2|| \ne |q_1r_1|$.

\item $|p_1p_2| \ne |r_2s|$, $|p_1p_2| \ne |r_1s|$, 
$|p_2q_2| \ne |r_1s|$, $|p_1q_1| \ne |r_2s|$, 
$||p_1q_1| \pm |p_2q_2| \pm |p_1p_2|| \ne ||r_1s| \pm |r_2s||$,
$|p_2q_2| \ne |q_1r_1|$, $|p_1q_1| \ne |q_2r_2|$.

\item $|p_2q_2| \ne ||p_1p_2| \pm |r_1s||$, $|p_1q_1| \ne ||p_1p_2| \pm |r_2s||$, 
$|p_1q_1| \ne ||p_1p_2| \pm |r_1s| \pm |r_2s||$, $|p_2q_2| \ne ||p_1p_2| \pm |r_1s| \pm |r_2s||$, 
$|p_1q_1| \ne ||q_2r_2| \pm |r_2s||$, $|p_2q_2| \ne ||q_1r_1| \pm |r_1s||$, 
$|p_1p_2| \ne ||r_1s| \pm |r_2s||$.

\item $||p_1q_1| \pm |q_1r_1|| \ne ||p_2q_2| \pm |r_2s||$, $||p_2q_2| \pm |q_2r_2|| \ne ||p_1q_1| \pm |r_1s||$, 
$|q_1r_1| \ne ||p_2q_2| \pm |r_2s||$, $|q_2r_2| \ne ||p_1q_1| \pm |r_1s||$, 
$|p_2q_2| \ne ||p_1q_1| \pm |r_1s||$, $|p_1q_1| \ne ||p_2q_2| \pm |r_2s||$, 
$|p_2q_2| \ne ||p_1q_1| \pm |r_1s| \pm |r_2s||$.

\end{enumerate}


Among them 20 conditions involve the edges $q_1r_1$ and $q_2r_2$ of the 7-cycle that we want to avoid in the conditions.
We shall replace each of these conditions by a set of conditions that prevents the 7-cycle from being drawn as the layer graph representation that corresponds to that condition. Collection of all these new conditions and the ones that are not replaced will constitute the rigidity conditions for the 7-cycle. As stated before if the 7-cycle is line rigid then the $(p_3, q_3, r_3, s)$ will be a 4-cycle which can be made line rigid by imposing the condition $|p_3q_3| \ne |r_3s|$~\cite{DBLP:conf/wabi/ChinLSY07}. This condition together with the rigidity conditions for the 7-cycle will constitute the rigidity conditions for the whole component.

As an example of replacing conditions we shall replace the first condition, viz., $|p_1p_2| \ne |q_2r_2|$, which corresponds to the layer graph representation of the 7-cycle in Fig.~\ref{configa:label}. To replace the condition
we find a set of conditions that prevent the drawing of layer graph of the 7-cycle $(p_1, q_1, r_1, s, r_2, q_2, p_2)$ in the configuration of Fig.~\ref{configa:label}. 
For this we draw all the possible configurations of the layer graph of the whole component with the layer graph of the 7-cycle being in the configuration of Fig.~\ref{configa:label}. 
This new set of conditions acts as a replacement for the condition $|p_1p_2| \ne |q_2r_2|$ since that set will prevent the drawing of the layer graph of the 7-cycle $(p_1, q_1, r_1, s, r_2, q_2, p_2)$ in the corresponding configuration in Fig.~\ref{configa:label}.    

\begin{figure}[!h] 
\centering 
\includegraphics[scale=0.7]{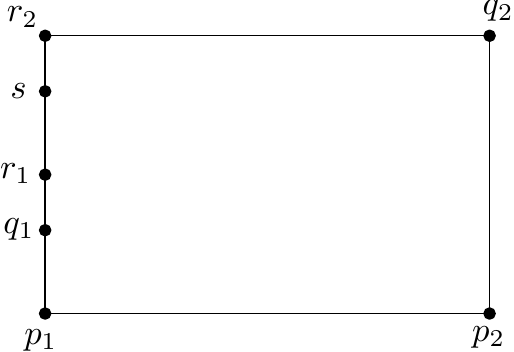} 
\caption{The layer graph representation of the 7-cycle $(p_1, q_1, r_1, s, r_2, q_2, p_2)$ corresponding to the condition $|p_1p_2| \ne |q_2r_2|$.}
\label{configa:label} 
\end{figure} 

Since $p_1$, $p_2$ and $p_3$ are made line rigid in the first round they must lie on a line and their positions must be unique (upto translation and reflection) after first round. Since in the present configuration of the 7-cycle (Fig.~\ref{configa:label}) $p_1$ and $s$ are on  the same side of the layer graph the edges $p_3q_3$, $q_3r_3$ and $r_3s$ can have 4 distinct configurations giving rise to 4 distinct layer graph representations (Fig.~\ref{configaall4fig:label}) of the whole component with the layer graph of the 7-cycle being in the configuration of Fig.~\ref{configa:label}.  
Thus, in order to be able to draw the layer graph of the 7-cycle $(p_1, q_1, r_1, s, r_2, q_2, p_2)$ in the configuration of Fig.~\ref{configa:label} the layer graph of the whole component must have one of the four distinct configurations as shown in Fig.~\ref{configaall4fig:label}.

First, we consider the configuration where $p_3q_3$ and $r_3s$ are horizontal, and $q_3r_3$ is vertical (Fig.~\ref{configaall4fig:label}a). The condition $|p_1p_2| \ne |q_2r_2|$  prevents the 7-cycle from being drawn as a layer graph of present configuration. However, it involves the edge $q_2r_2$ which we need to avoid. In the present configuration of the layer graph of the component $p_1, q_1, r_1, s$ and $r_2$ are on a line which is parallel to $p_2q_2$ and $q_3r_3$. So, we must have $|q_2r_2| = ||p_2p_3| \pm |p_3q_3| \pm |r_3s||$. Using this the condition becomes $|p_1p_2| \ne ||p_2p_3| \pm |p_3q_3| \pm |r_3s||$. Since $||p_1p_2| \pm |p_2p_3|| = |p_1p_3|$ the condition reduces to $|p_1p_3| \ne ||p_3q_3| \pm |r_3s||$. If we ensure this condition then we must have $|p_1p_2| \ne |q_2r_2|$ in the present configuration of the component. Thus, the component in general and the 7-cycle in particular cannot be drawn as a layer graph in the present configurations of the 7-cycle and the component.

\begin{figure}[!h] 
\centering 
\includegraphics[scale=0.7]{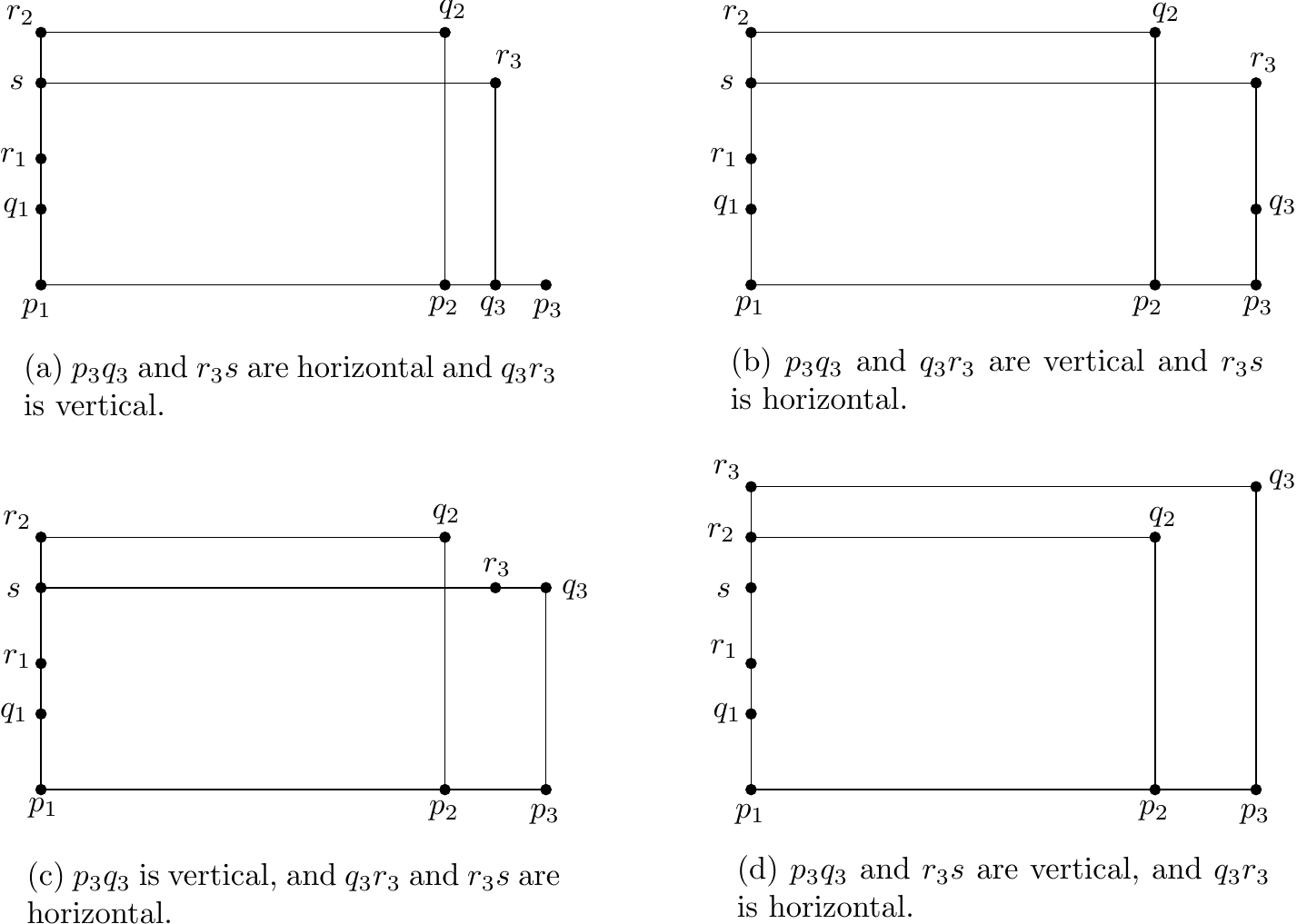} 
\caption{Layer graphs of the basic component when the layer graph of the 7-cycle $(p_1, q_1, r_1, s, r_2, q_2, p_2)$ has 4 edges $p_1q_1$, $q_1r_1$, $r_1s$ and $sr_2$ on one side.}
\label{configaall4fig:label} 
\end{figure} 

Now we consider the case when $p_3q_3$ and $q_3r_3$ are vertical, and $r_3s$ is horizontal (Fig.~\ref{configaall4fig:label}b). In the present configuration of the layer graph of the component $p_1, q_1, r_1, s$ and $r_2$ are on a line, and $p_3q_3$ and $q_3r_3$ are on a line. Those lines are parallel and they are parallel to $p_2q_2$. So, we must have $|q_2r_2| = ||p_2p_3| \pm |r_3s||$. Using this the condition becomes $|p_1p_2| \ne ||p_2p_3| \pm |r_3s||$. We have $||p_1p_2| \pm |p_2p_3|| = |p_1p_3|$. Using this the rigidity condition $|p_1p_2| \ne |q_2r_2|$  becomes $|p_1p_3| \ne |r_3s|$.  


Next, we consider the case when $p_3q_3$ is vertical, and $q_3r_3$ and $r_3s$ are horizontal (Fig.~\ref{configaall4fig:label}c). The condition $||p_1q_1| \pm |q_1r_1| \pm |r_1s| \pm |sr_2|| \ne |p_2q_2|$  prevents the 7-cycle from being drawn as a layer graph of present configuration. However, it involves the edge $q_1r_1$ which we need to avoid. In the present configuration of the layer graph of the component $p_1$, $p_2$ and $p_3$ are on a line, and $q_3$, $r_3$ and $s$ are on a line. The lines are parallel. So, we must have $||p_1q_1| \pm |q_1r_1| \pm |r_1s|| = |p_3q_3|$. Using this the condition becomes $||p_3q_3| \pm |sr_2|| \ne |p_2q_2|$. 

Finally, we consider the case when $p_3q_3$ is vertical, $q_3r_3$ is horizontal and $r_3s$ is vertical (Fig.~\ref{configaall4fig:label}d). 
 In the present configuration of the layer graph of the component $p_1$, $p_2$ and $p_3$ are on a line. The line is parallel to $q_3r_3$. So, we must have $||p_1q_1| \pm |q_1r_1| \pm |r_1s| \pm |sr_3|| = |p_3q_3|$. Using this the rigidity condition $||p_1q_1| \pm |q_1r_1| \pm |r_1s| \pm |sr_2|| \ne |p_2q_2|$ becomes $||p_3q_3| \pm |sr_3| \pm |sr_2|| \ne |p_2q_2|$.

The following lemma justifies the replacement (the proof is omitted). 
  
\begin{lemma} \label{repconp1p2neq2r2lemma:label}  
The 7-cycle $(p_1, q_1, r_1, s, r_2, q_2, p_2)$ of the 3-path basic component of Fig.~\ref{3pathfig:label} cannot be drawn as the layer graph 
of Fig.~\ref{configa:label} if the edges of the component satisfy the following conditions:  



\begin{center}
$|p_1p_3| \ne ||p_3q_3| \pm |r_3s||, |p_1p_3| \ne |r_3s||, ||p_3q_3| \pm |sr_2||\ne |p_2q_2|,$

$||p_3q_3| \pm |sr_2| \pm |sr_3||\ne |p_2q_2|$
\end{center}
\end{lemma}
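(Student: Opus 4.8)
The point of this lemma is one of a family: by Theorem~\ref{lr=lgtheorem:label}, line rigidity of the 3-path component is equivalent to its non-drawability as a layer graph, and since every layer-graph drawing of the component induces a drawing of the 7-cycle $(p_1,q_1,r_1,s,r_2,q_2,p_2)$ that is one of the finitely many layer-graph configurations of that cycle, it suffices to kill each such configuration separately; the lemma does this for the configuration of Fig.~\ref{configa:label} using only edge lengths that are available after round~1. So the plan is to show directly that the four displayed conditions are incompatible with any layer-graph drawing of the whole component in which the 7-cycle occupies the shape of Fig.~\ref{configa:label}. The one structural fact I would use throughout is that $p_1,p_2,p_3$ are already made line rigid in the first round, so in every layer-graph drawing of the component they lie on a common line $L$ at their prescribed pairwise distances; in particular $||p_1p_2|\pm|p_2p_3||=|p_1p_3|$, and $p_3$ --- unlike the seven vertices $q_1,q_2,q_3,r_1,r_2,r_3,s$ --- is pinned to $L$.

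First I would enumerate the drawings. Freezing the 7-cycle in the shape of Fig.~\ref{configa:label} determines the positions of its seven vertices up to the fold; in particular $s$ lands at a fixed position off $L$. The chain $p_3q_3r_3s$ must then be realized as a layer subpath from the on-line vertex $p_3$ to this fixed $s$, with each of its three edges axis-parallel. I would then argue --- this is the combinatorial core --- that once property P3 (the whole drawing is genuinely two-dimensional) and property P4 (no two vertices coincide under the fold) are imposed, exactly the four direction-patterns of Fig.~\ref{configaall4fig:label} survive for $(p_3q_3,q_3r_3,r_3s)$: (horizontal, vertical, horizontal), (vertical, vertical, horizontal), (vertical, horizontal, horizontal), and (vertical, horizontal, vertical); every other pattern either cannot bridge the vertical gap between $p_3\in L$ and $s$, or collapses part of the chain onto $L$, or forces a coincidence of vertices.

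Then I would carry out a four-way case analysis. For each surviving pattern one writes down the edge-length relation that must hold for the 7-cycle to be drawable in that orientation --- this is $|p_1p_2|=|q_2r_2|$ for patterns (a),(b) and $||p_1q_1|\pm|q_1r_1|\pm|r_1s|\pm|sr_2||=|p_2q_2|$ for patterns (c),(d) --- and then eliminates the edges that may not be mentioned in round~1 by substituting the collinearities the pattern forces among $p_1,p_2,p_3$ and the chain vertices. For instance, in pattern (a) the vertices $p_1,q_1,r_1,s,r_2$ become collinear on a line parallel to $p_2q_2$ and $q_3r_3$, forcing $|q_2r_2|=||p_2p_3|\pm|p_3q_3|\pm|r_3s||$; combined with $||p_1p_2|\pm|p_2p_3||=|p_1p_3|$ this turns the drawability relation into $|p_1p_3|=||p_3q_3|\pm|r_3s||$, excluded by the first of the four conditions. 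Patterns (b), (c), (d) are treated the same way and, via the relations $|q_2r_2|=||p_2p_3|\pm|r_3s||$, $||p_1q_1|\pm|q_1r_1|\pm|r_1s||=|p_3q_3|$, and $||p_1q_1|\pm|q_1r_1|\pm|r_1s|\pm|sr_3||=|p_3q_3|$ respectively, reduce to $|p_1p_3|=|r_3s|$, $||p_3q_3|\pm|sr_2||=|p_2q_2|$, and $||p_3q_3|\pm|sr_3|\pm|sr_2||=|p_2q_2|$ --- each excluded by one of the remaining three conditions. Since all four admissible drawings are thereby ruled out, the 7-cycle cannot be drawn as the layer graph of Fig.~\ref{configa:label}, which is the assertion of the lemma.

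I expect the genuinely delicate step to be the exhaustiveness claim of the enumeration: verifying that precisely the four patterns of Fig.~\ref{configaall4fig:label} are compatible with the fixed placements of $p_1,p_2,p_3$ and $s$, which needs a careful joint use of P3, P4 and the length/parity constraints along the chain. A secondary point requiring care is the bookkeeping of the $\pm$ signs: one must check that every sign combination produced by the collinearity relations (reflecting the possible left/right orders of $p_1,p_2,p_3$ on $L$ and the two fold directions) is among those quantified in the stated conditions, so that the four inequalities exclude all sub-configurations rather than merely some of them. The per-case algebra itself is the routine substitution indicated above.
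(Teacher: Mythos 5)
Your proposal is correct and follows essentially the same route as the paper's own argument: reduce to the four layer-graph configurations of Fig.~\ref{configaall4fig:label} for the chain $p_3q_3r_3s$, then use the forced collinearities (together with $||p_1p_2|\pm|p_2p_3||=|p_1p_3|$) to eliminate $|q_2r_2|$ and $|q_1r_1|$ from the drawability relation in each case, arriving at exactly the four stated inequalities. If anything you are more explicit than the paper, which carries out only configuration (a) in the appendix and treats the other three ``without loss of generality,'' and you rightly flag the exhaustiveness of the four-pattern enumeration as the step the paper merely asserts.
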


Similarly, we can replace the other conditions for rigidity that involve the edges $q_1r_1$ and $q_2r_2$.  Collecting all the conditions we have the following lemma:

\begin{lemma}\label{3pathlemma:label}
The structure consisting of 3 paths of degree 2 nodes $p_1q_1r_1s$, $p_2q_2r_2s$ and $p_3q_3r_3s$ of length 2 attached to the common node $s$ of degree 3 and having the nodes $p_1$, $p_2$ and $p_3$ fixed in the first round is line rigid if its edges satisfy the following conditions:

\begin{enumerate}
\item $|p_1p_2| \notin $ 
\{$|r_1s|$,
$|r_2s|$,
$||r_1s| \pm |r_2s||$\},

\item $|p_2p_3| \notin $ 
\{$|r_2s|$,
$|r_3s|$,
$||r_2s| \pm |r_3s||$\},

\item $|p_3p_1| \notin $ 
\{$|r_3s|$,
$|r_1s|$,
$||r_3s| \pm |r_1s||$\},

\item $|p_1q_1| \notin$ 
\{$|r_1s|$,
$|r_2s|$,
$||r_1s| \pm |r_2s||$,
$||p_1p_2| \pm |r_1s||$,
$||p_1p_2| \pm |r_2s||$,
$||p_1p_3| \pm |r_1s||$,
$||p_1p_3| \pm |r_3s||$,
$||p_1p_2| \pm |r_1s| \pm |r_2s||$,
$||p_1p_3| \pm |r_1s| \pm |r_3s||$\},

\item $|p_2q_2| \notin$
\{$ |r_1s|$,
$|r_2s|$,
$|p_1q_1|$,
$||r_1s| \pm |r_2s||$,
$||p_1p_2| \pm |r_1s||$,
$||p_1p_2| \pm |r_2s||$,
$||p_2p_3| \pm |r_2s||$,
$||p_2p_3| \pm |r_3s||$,
$||p_1q_1| \pm |r_1s||$,
$||p_1q_1| \pm |r_2s||$,
$||p_1p_2| \pm |r_1s| \pm |r_2s||$,
$||p_2p_3| \pm |r_2s| \pm |r_3s||$,
$||p_1q_1| \pm |r_1s| \pm |r_2s||$,
$||p_1q_1| \pm |p_1p_2| \pm |r_1s||$,
$||p_1q_1| \pm |p_1p_2| \pm |r_2s||$,
$||p_1q_1| \pm |p_1p_2| \pm |r_1s| \pm |r_2s||$\},

\item $|p_3q_3| \notin$
\{$ |r_1s|$,
$|r_2s|$,
$|r_3s|$,
$|p_1q_1|$,
$|p_2q_2|$,
$||r_2s| \pm |r_3s||$,
$||r_3s| \pm |r_1s||$,
$||p_1p_3| \pm |r_3s||$,
$||p_2p_3| \pm |r_3s||$,
$||p_1q_1| \pm |r_1s||$,
$||p_1q_1| \pm |r_3s||$,
$||p_2q_2| \pm |r_2s||$,
$||p_2q_2| \pm |r_3s||$,
$||p_1p_3| \pm |r_1s| \pm |r_3s||$,
$||p_2p_3| \pm |r_2s| \pm |r_3s||$,
$||p_1q_1| \pm |r_1s| \pm |r_3s||$,
$||p_2q_2| \pm |r_2s| \pm |r_3s||$,
$||p_1q_1| \pm |p_1p_3| \pm |r_3s||$,
$||p_2q_2| \pm |p_2p_3| \pm |r_3s||$,
$||p_1q_1| \pm |p_1p_3| \pm |r_1s| \pm |r_2s||$,
$||p_2q_2| \pm |p_2p_3| \pm |r_2s| \pm |r_3s||$\}.
\end{enumerate}
\end{lemma}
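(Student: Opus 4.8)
The plan is to prove Lemma~\ref{3pathlemma:label} by invoking Theorem~\ref{lr=lgtheorem:label}: it suffices to show that under the stated inequalities the 3-path component admits no layer graph drawing. First I would recall the reduction already set up in the text. Since $p_1$, $p_2$, $p_3$ are fixed (line rigid) after round~1, any hypothetical layer graph drawing of the component restricts to a layer graph drawing of the 7-cycle $(p_1,q_1,r_1,s,r_2,q_2,p_2)$, and, once $s$ is thereby pinned down, to a layer graph drawing of the 4-cycle $(p_3,q_3,r_3,s)$. So I would argue contrapositively: a layer graph drawing of the component forces the 7-cycle into one of its $42$ layer configurations (the six groups of Fig.~\ref{layergraph7cycle:label}, crossed with the four placements of the chain $p_3q_3r_3s$), and then I must exhibit, for each such configuration, one of the listed inequalities that is violated.

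The core of the proof is a finite case analysis over these $42$ configurations, organized by the six groups according to how many of the 7-cycle's edges lie on each side of the folding line. For each raw rigidity condition I would do exactly what the worked example before Lemma~\ref{repconp1p2neq2r2lemma:label} does: if the condition mentions one of the three unqueried edges $q_1r_1$, $q_2r_2$, $q_3r_3$, I rewrite it. The rewriting uses the geometry of the particular layer drawing --- in that drawing certain chains of vertices are collinear and parallel to one another, so the length of an unqueried edge is forced to equal a signed sum of \emph{queried} lengths (e.g.\ $|q_2r_2| = \bigl||p_2p_3| \pm |p_3q_3| \pm |r_3s|\bigr|$ in Fig.~\ref{configaall4fig:label}a), and substituting this in, together with the first-round identity $\bigl||p_ip_j| \pm |p_jp_k|\bigr| = |p_ip_k|$ among the three fixed points, collapses the condition to one involving only queried edge lengths. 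Lemma~\ref{repconp1p2neq2r2lemma:label} is the template; I would assert that the remaining $19$ conditions that mention $q_1r_1$ or $q_2r_2$ are handled by the same substitution-and-collapse, and the $20$-plus conditions that mention $q_3r_3$ are handled by attaching the four chain configurations to each 7-cycle drawing and repeating the argument, while the conditions already free of unqueried edges are kept verbatim. Finally I would verify by inspection that every inequality produced this way appears in one of the six bulleted membership statements of the lemma --- grouping them by which edge's length is being constrained ($|p_1p_2|$, $|p_2p_3|$, $|p_3p_1|$, $|p_1q_1|$, $|p_2q_2|$, $|p_3q_3|$) exactly reproduces items~1--6.

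The main obstacle is bookkeeping rather than any single hard idea: there is no conceptual difficulty once Theorem~\ref{lr=lgtheorem:label} and the collinearity reasoning are in hand, but one must (a) correctly enumerate all $42$ layer configurations of the 7-cycle-plus-chain without omission, (b) for each, correctly read off which vertices are collinear and hence get the right signed-sum substitution for every unqueried edge, and (c) check that no configuration slips through with \emph{every} surviving inequality automatically satisfiable in round~1 --- i.e.\ that each configuration is killed by a condition the algorithm can actually enforce given the choices it has for $|p_1q_1|$, $|p_2q_2|$, $|p_3q_3|$ (using Observation~\ref{obs:one} to bound how many bad values must be avoided). I would therefore present one representative case in full (the $|p_1p_2| \ne |q_2r_2|$ case, already done), state that the other $41$ are verified identically, and relegate the exhaustive table to the reasoning already sketched, concluding that under conditions~1--6 no layer graph drawing of the component exists, so by Theorem~\ref{lr=lgtheorem:label} the 3-path $ppg$ is line rigid. \qed
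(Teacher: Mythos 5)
Your proposal follows essentially the same route as the paper: reduce rigidity to the non-existence of a layer graph drawing via Theorem~\ref{lr=lgtheorem:label}, split the component into the 7-cycle $(p_1,q_1,r_1,s,r_2,q_2,p_2)$ plus the 4-cycle $(p_3,q_3,r_3,s)$, enumerate the 42 layer configurations, and replace every raw condition that mentions an unqueried edge by substituting the forced signed-sum expression for its length (exactly the template of Lemma~\ref{repconp1p2neq2r2lemma:label}), then collect the surviving inequalities into items 1--6; the paper itself presents only the one worked case and asserts the rest, just as you do. The only slip is arithmetical and cosmetic: the 42 configurations arise as 6 groups times 7 positions of the 7-cycle's edges (with the four chain placements then examined within each), not as $6$ crossed with $4$.
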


As mentioned before, we make triplet of points $(p_1, p_2, p_3)$ of each 3-path component line rigid in the first round. Let $S$ be the set of points for such triplets. We make the points in $S$ line rigid in the first round. We make the remaining 7 points of each 3-path component line rigid in the second round. To select triplet of points in $S$ as $(p_1, p_2, p_3)$ of a component, let us select any point of $S$ as $p_1$. Then let us find another point of $S$, we denote it as $p_2$, satisfying the conditions on the length $|p_1p_2|$ mentioned in serial number 1 of Lemma~\ref{3pathlemma:label}. By Observation 1, at most 8 edges will not satisfy the conditions on $|p_1p_2|$. We need at least 8 extra points, i.e., we need to have a total of at least 9 more points, other than $p_1$, in $S$ as candidate for $p_2$. 

After $p_2$ is selected, let us find another point of $S$, we denote it as $p_3$, from the remaining pints of $S$ such that the conditions on $|p_2p_3|$ in serial numbers 2 of Lemma~\ref{3pathlemma:label} are satisfied. By Observation 1, at most 8 edges will not satisfy the conditions on $|p_2p_3|$. This warrants the set $S$ to have at least 8 extra points other than $p_1$, $p_2$ and $p_3$. The point $p_3$ selected this way by satisfying the conditions on $p_2p_3$ must also have to satisfy the conditions on $p_3p_1$ mentioned in serial number 3 of Lemma~\ref{3pathlemma:label}. By Observation 1, at most 8 edges will not satisfy the conditions on $|p_3p_1|$. This warrants the set $S$ to have at least 8 more extra points, i.e., 16 extra points, other than $p_1$, $p_2$ and $p_3$. 

But if $S$ has only 19 points for the selection of $p_i$s it may happen that all the basic components are attached to the same triplets. This hinders our goal of obtaining a better value for $\alpha$ than previously known. We need to attach the basic components evenly to all the points of $S$ so that the same number of edges can be attached to each of them in the first round and all of those edges, except for a constant number, are used to attach the basic components. In other words, we need to attach the 3-path components to the points in $S$ in such a way that the numbers of components attached to any two points differ by at most a constant number. 

Now we describe our algorithm to select triplets of points in $S$ to attach components. To attach a basic component we always select a point in $S$ with the lowest valence as the first point (say $p_1$). Of the remaining points of $S$, at most 8 points may not be acceptable for the second point (say $p_2$), because of the conditions on $p_1p_2$. From among the rest $|S|-1$ points that satisfy the conditions on $p_1p_2$ we select the one that has the lowest valence, as $p_2$. Of the rest $|S|-2$ points of $S$, at most 16 may not be acceptable for the last point, say $p_3$, because of the conditions on $p_2p_3$ and $p_3p_1$. From among the rest points that satisfy the conditions on $p_2p_3$ and $p_3p_1$ we choose the one that has the lowest valence, as $p_3$. This method will be follwed to attach each basic component to the points in $S$. While, we shall attach the basic components sequentially.

To specify the number of basic components attached to a point in $S$ we shall use the term valence. We denote the set of points with valence $d$ as $S_d$. The following lemma tells us how big $S$ must be (the proof is omitted):


\begin{lemma} \label{fixedpointlemma:label}
A set $S$ of 35 points is sufficient to ensure that the valences of any two points in $S$ differ by at most 2. 
\end{lemma}


We make the above set $S$ of 35 points line rigid in the first round by using jewel of Damaschke~\cite{DBLP:journals/dam/Damaschke03} as the \textit{ppg}. We create 6 jewels hanging from a common strut that is incident on 2 points of $S$. This will make 32 points line rigid. For this we need to query the lengths of 49 edge. We make the remaining 3 points line rigid by using triangle as the \textit{ppg}. For each of these 3 points we query its distance from each of the pair of points that are incident on the strut. There will be 6 more queries for edge lengths. Thus, we shall query a total of 55 edges in the first round to make the 35 points of $S$ line rigid in that round. 

The conditions on $p_1q_1$, $p_2q_2$ and $p_3q_3$ in serial numbers respectively 3, 4 and 5 of Lemma~\ref{3pathlemma:label} will not be satisfied by at most 40, 90 and 122 edges respectively (by Observation 1). In addition to the 122 extra edges needed at each of $p_i$'s to satisfy the conditions on $|p_1q_1|$, $|p_2q_2|$ and $|p_3q_3|$ we need 2 more extra edges incident on each of $p_i$ to accommodate the difference of 2 between the number of basic components that can be attached to the $p_i$'s. Thus, we need a total of 124 extra edges incident on each of the points $p_i, i = 1, ..., 35$ of $S$. We shall attach $3b$, $3b+1$ or $3b+2$ (where $b$ is a positive integer) number of 3-path components to each point in $S$. This requires us to have $3b+124$ edges incident on each of $p_i$'s in $S$.  In the worst case there will be at most 18 points in $S$ with valence $3b$, no points in $S$ with valence $b+1$ and the remaining points with valence $3b+2$. Thus, we shall be able to construct a total of at least $3b+11$ number of 3-path components from the edges provided for $p_iq_i$ at all the $p_i$'s in $S$. Now we describe the algorithm to construct a composite $ppg$ made up of 3-path components such that all the rigidity conditions listed in Lemma~\ref{3pathlemma:label} are satisfied for each of them.

{\bf Algorithm 1.}
Let the total number of points be $n = 245b + 4,419$, where $b$ is a positive integer. We attach at least $3b$ and at most $3b+2$ numbers of 3-path components (Fig.~\ref{3pathfig:label}) to each of 35 rigid points in $S$ subject to the condition that the total number of such components being $35b + 11$. 
  
In the first round, we make distance queries represented by the edges of the graph in Fig.~\ref{querygraph:label}. All the nodes $p_i$ ($i=1, ..., 35$) in the subgraph enclosed by the rectangle are elements of $S$ and are made line rigid in the first round by using the jewel of~\cite{DBLP:journals/dam/Damaschke03} as the \textit{ppg}. There are 6 jewels attached to a common strut in the subgraph. Residual 3 points are made line rigid by using triangle as the \textit{ppg}. They are attached to the common strut. There are a total of 55 edges in the subgraph. Each of the vertices $p_i, p_j$, or $p_k$  ($i, j, k = 1, ..., 35$) of $S$ has $b+124$ leaves to attach $3b$, $3b+1$ or $3b+2$ 3-path components (Fig.~\ref{3pathfig:label}). Since there will be $35b+11$ 3-path components we make $35b+11$ groups of 4 nodes $(r_{il}, r_{jl}, r_{kl}, s_l)$, $(l = 1, ..., 35b+11)$. We query the distances $|r_{il}s_l|$, $|r_{jl}s_l|$ and $|r_{kj}s_l|$, $(l = 1, ..., 35b+11)$  
in the first round. We will make a total of $210b+4,428$ pairwise distance queries in the first round for the placement of $n = 245b + 4,419$ points. 

\begin{figure}[!h] 
\centering 
\includegraphics[scale=0.7]{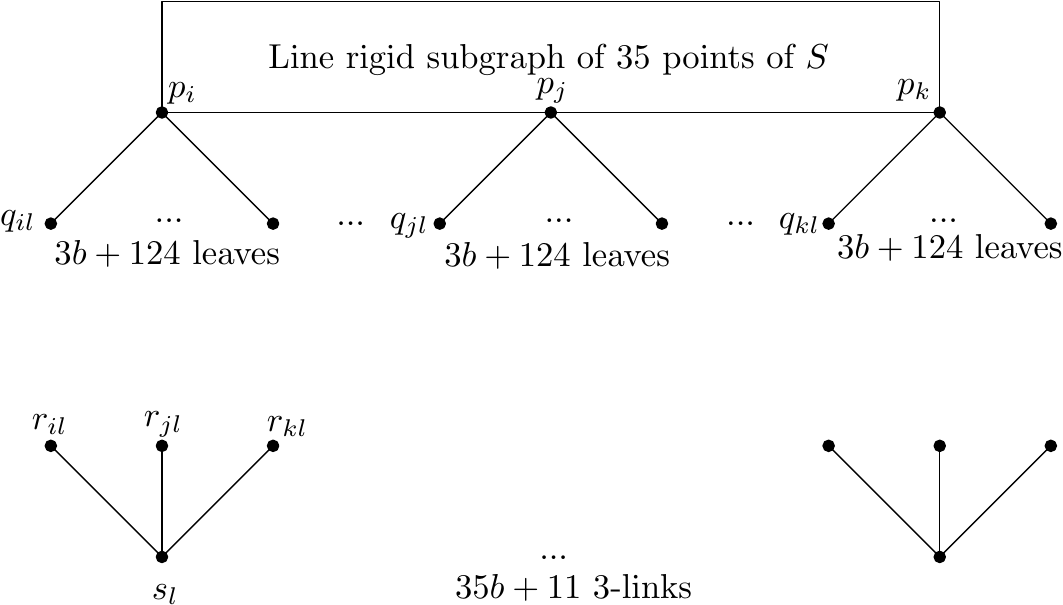} 
\caption{Queries in the first round.}
\label{querygraph:label} 
\end{figure}

In the second round, for each 3-link $(r_{il}, r_{jl}, r_{kl}, s_l), l = 1, ..., 35b + 11,$ we construct a 3-path component (Fig.~\ref{3pathfig:label}),  satisfying all its rigidity conditions as in Lemma~\ref{3pathlemma:label}. For each such 3-link we  select a point $p_i$, from the subgraph of 35 points of $S$ that has the lowest valency of 3-path component of Fig.~\ref{3pathfig:label}. Since all the 35 points $p_i, i = 1, ..., 35,$ are rigid in the first round, for any pair of such fixed points $(p_i, p_j) (i, j = 1, ...35; i \ne j)$ we can find the distance $|p_ip_j|$. So, for each pair of points $(p_i, p_j) (i, j = 1, ..., 35; i \ne j)$, we shall use  $(p_i, p_j)$ as an edge in the construction of the 3-path component of Fig.~\ref{3pathfig:label}. 

Now from the subgraph of 35 points of $S$ we select another point $p_j (j \ne i)$ such that the length $|p_ip_j|$ satisfies all the 4 conditions of rigidity on it as stated in serial number 1 of 
Lemma~\ref{3pathlemma:label} and that it has the lowest valency of 3-path component of Fig.~\ref{3pathfig:label} among all such qualifying points. We note that we can always find such point $p_j$, because there will be at most 8 edges $(p_ip_j)$ whose lengths do not satisfy the rigidity conditions on it (Lemma~\ref{3pathlemma:label}) whereas we have 34 more points for choosing the point $p_j$. Similarly, from the subgraph of 35 points of $S$ we select another point $p_k (k \ne i, k \ne j)$ such that the length $|p_jp_k|$ satisfies all the 4 conditions of rigidity on it as stated in serial number 2 of Lemma~\ref{3pathlemma:label} and the length $|p_kp_i|$ satisfies all the 4 conditions of rigidity on it as stated in serial number 3 of Lemma~\ref{3pathlemma:label}, and that it has the lowest valency of 3-path component of Fig.~\ref{3pathfig:label} among all such qualifying points. We note that we can always find such point $p_k$, because there will be at most 16 nodes $p_k$ such that the lengths of the edges $p_jp_k$ and $p_kp_i$ do not satisfy the rigidity conditions on them (Lemma~\ref{3pathlemma:label}) whereas we have 33 more points for choosing the point $p_k$.

Then we find an edge $p_iq_{il}$ rooted at $p_i$ satisfying the 20 conditions of rigidity on it as stated in serial no. 4 of Lemma~\ref{3pathlemma:label}, then we find another edge $p_jq_{jl}$  rooted at $p_j$ satisfying the 45 conditions on it as stated in serial no. 5 of  Lemma~\ref{3pathlemma:label} and finally, we find another edge $p_kq_{kl}$ rooted at $p_k$ satisfying the 61 conditions on it as stated in serial no. 6 of Lemma~\ref{3pathlemma:label}.

Then for each $l, (l = 1, ..., 35b + 11)$, we query the distances $|q_{il}r_{il}|$, $|q_{jl}r_{jl}|$  and $|q_{kl}r_{kl}|$ to form a 3-path component $p_ip_jp_kq_{il}q_{jl}q_{kl}r_{il}r_{jl}r_{kl}s_l$. Its edges will satisfy all the rigidity conditions of Lemma~\ref{3pathlemma:label}. Thus, all the $35b + 11$ 3-links will be consumed to construct $35b + 11$ 3-path components. For this $105b + 33$ edges will be queried in the second round.

There will be unused leaves $q_{il}$/$q_{jl}$/$q_{kl}$) numbering 4,307 in total for the 35 points of $S$. We use a 4-cycle \textit{ppg}~\cite{DBLP:journals/dam/Damaschke03} to fix 4,306 of them and a triangle \textit{ppg} to fix the rest 1 point in the second round. As before, for each pair of points $(p_i, p_j) (i, j = 1, ..., 35; i \ne j)$, we shall use  $(p_i, p_j)$ as an edge in the construction of the 4-cycle. For each unused point $q_{il}$ rooted at $p_i$ we find another point $q_{jl}$ rooted at $p_j$ such that $|p_ip_{il}| \ne |p_jp_{jl}|$. Then the 4-cycle $p_iq_{il}q_{jl}p_j$ will be line rigid (Observation 2). Then we query the distance $|q_{il}q_{jl}|$ in the second round to complete the 4-cycle.  Note that we can always find a point like $q_{jl}$. For, after repeated selection of such matching pairs of edges there may remain at most 2 edges $p_iq_{il}$ rooted at $p_i$ of length equal to that of the same number of edges rooted at $p_j$ (Observation 1). In such a situation we switch the matching to match such edges rooted at $p_i$ with edges other than those same length edge/s rooted at $p_j$ - this is always possible because there are at most 2 edges rooted at $p_j$ that have the same length (Observation 1). To make the remaining 1 leave node line rigid we query in the second round its distance from any point of $S$ other than its parent node. 

For 4,307 unused points (after the construction of the 3-path components) 2,153 4-cycles and 1 triangle will be constructed.  2,153 edges will be queried to complete the 4-cycles and 1 edge will be queried to construct the triangle. The total number of queries in the second round will be $(105b+33)+2,153+1$, i.e., $105b+2,187$. \qed

\begin{theorem}\label{3pathrigiditytheorem:label}  
The \textit{ppg} constructed by Algorithm 1 is line rigid.  
\end{theorem}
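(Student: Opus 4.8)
The plan is to prove rigidity incrementally, in the same order in which Algorithm~1 builds the graph, using repeatedly the principle that line rigidity \emph{composes}: if a line-rigid subgraph $H$ has already been placed uniquely (up to translation and reflection) and a further gadget $\Gamma$ meets $H$ only in a set of vertices whose positions are thereby determined, and $\Gamma$ is line rigid once those shared vertices are pinned, then $H\cup\Gamma$ is line rigid. This holds because any placement of $H\cup\Gamma$ restricts to a placement of $H$, which is unique and hence pins the shared vertices, and then restricts to a placement of $\Gamma$ with those vertices fixed, which is again unique; so the global placement is forced. Equivalently one can run the argument through Theorem~\ref{lr=lgtheorem:label}: a layer-graph drawing of $H\cup\Gamma$ would restrict to a two-dimensional layer-graph drawing of a rigid piece, which is impossible, unless it collapses that piece onto the base line, in which case the gadget $\Gamma$ attached to it, being rigid relative to its pinned vertices, still admits no layer-graph extension.

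First I would treat the first-round subgraph $H_0$ on the set $S$ of $35$ points. By Damaschke's results the jewel and $K_{2,3}$ are line rigid, so each of the six jewels hanging from the common strut is rigid; since consecutive jewels share the strut edge, an induction over the six jewels via the composition principle shows that the $32$-vertex subgraph they span is line rigid. Each of the three remaining points is joined by two edges to the endpoints of the strut, whose length is already fixed, so it forms a rigid triangle and is placed uniquely. Hence $H_0$ is line rigid and, after the first round, every $p_i\in S$ has a definite position; in particular, for every pair $p_i,p_j\in S$ the distance $|p_ip_j|$ is determined and may be used as an edge of a later gadget.

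Next I would process the $35b+11$ three-path components. For the $l$-th $3$-link $(r_{il},r_{jl},r_{kl},s_l)$, Algorithm~1 selects $p_i,p_j,p_k\in S$ so that $|p_ip_j|$ satisfies the four conditions of item~1 of Lemma~\ref{3pathlemma:label} and $|p_jp_k|,|p_kp_i|$ satisfy items~2 and~3, and then selects leaf edges $p_iq_{il},p_jq_{jl},p_kq_{kl}$ satisfying the $20$, $45$ and $61$ conditions of items~4--6; the counting built into the algorithm (at most $8$ forbidden choices for $p_j$ among $34$, at most $16$ for $p_k$ among $33$, and at most a constant number of forbidden lengths among the $b+124$ leaves at each $p$, all via Observation~1 and Lemma~\ref{fixedpointlemma:label}) guarantees that every such choice is available. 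Thus the hypotheses of Lemma~\ref{3pathlemma:label} hold for the component $p_ip_jp_kq_{il}q_{jl}q_{kl}r_{il}r_{jl}r_{kl}s_l$, which is therefore line rigid with $p_i,p_j,p_k$ fixed; the composition principle then places its $7$ new vertices uniquely, and an induction over $l=1,\dots,35b+11$ keeps the running union line rigid. The leftover $4,307$ unused leaves are handled the same way: they are matched into $2,153$ pairs $(q_{il},q_{jl})$ with $|p_iq_{il}|\ne|p_jq_{jl}|$ --- always possible because at most two edges at $p_j$ share a length (Observation~1) --- each pair closing a quadrilateral $p_iq_{il}q_{jl}p_j$ that is line rigid by the rigidity condition for quadrilaterals with unequal opposite edges~\cite{DBLP:conf/wabi/ChinLSY07}, while the single residual leaf forms a rigid triangle with a second point of $S$. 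Since every vertex of the constructed \textit{ppg} lies in $H_0$, in some three-path component, or in one of these last $2,154$ gadgets, repeated use of the composition principle yields that the whole \textit{ppg} is line rigid.

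I expect the main obstacle to be making the composition principle fully rigorous --- specifically, checking that ``line rigid with certain vertices fixed,'' as asserted by Lemma~\ref{3pathlemma:label} and by the quadrilateral condition, genuinely combines with the already-determined placement of $S$ to force \emph{global} uniqueness and not merely uniqueness relative to an a priori unknown embedding of the shared vertices. The layer-graph characterization of Theorem~\ref{lr=lgtheorem:label} is the natural device for dispelling this concern, but one must verify that when a rigid subgraph is collapsed onto the base line in a hypothetical layer-graph drawing it does not hand extra freedom to the gadgets hanging off it. Everything else --- verifying that the constant budgets ($8,16,20,45,61$ excluded lengths, $b+124$ leaves, $35b+11$ components, $2,153$ quadrilaterals and $1$ triangle) are mutually consistent and that no step exhausts the available vertices --- is routine bookkeeping which I would carry out but not dwell on.
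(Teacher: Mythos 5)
The paper actually omits its proof of this theorem, so there is no argument of its own to compare against; your proof --- establishing rigidity of the 35-point first-round subgraph from the jewels and triangles, then invoking Lemma~\ref{3pathlemma:label} for each 3-path component (with the counting via Observation~\ref{obs:one} guaranteeing that its hypotheses can always be met), then finishing the residual leaves with 4-cycles and a triangle, all glued by the composition principle for rigid subgraphs sharing already-placed vertices --- is exactly the argument the construction is designed to support. It is correct and supplies what the paper leaves out.
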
  
  
\begin{proof}  
Omitted.  
\end{proof}

The number of queries in the first and second rounds are $210b + 4,428$ and $105b + 2,187$ respectively. Thus, in 2 rounds a  
total of $315b + 6,615$ pairwise distances are to be queried for the placement of $245b + 4,419$ points. 
Now, $315b + 6,615 = (315/245)*(245b + 4419) - (9/7)*4419 + 6615 = 9n/7 + (46305 - 39771)/7 = 9n/7 + 6534/7$. Thus, we have the  
following theorem:

\begin{theorem}\label{3pathUBtheorem:label}  
$9n/7 + 6534/7 $ queries are sufficient to place $n$ distinct points on a line in two rounds.  
\end{theorem}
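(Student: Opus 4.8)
The plan is to read the bound directly off Algorithm~1, invoking Theorem~\ref{3pathrigiditytheorem:label} for correctness and then doing one line of arithmetic. Fix a positive integer $b$ and run Algorithm~1 on the associated point set of size $n = 245b + 4{,}419$; by Theorem~\ref{3pathrigiditytheorem:label} the resulting \textit{ppg} is line rigid, so it only remains to count the queried edges over the two rounds and re-express the total as a function of $n$.

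First I would tally the first round against the construction in Fig.~\ref{querygraph:label}: the $35$ anchor points of $S$ are fixed by the sub-\textit{ppg} consisting of six jewels on a common strut together with three triangles hung on that strut (a constant number of edges); each anchor additionally carries $3b+124$ leaf edges; and the $35b+11$ quadruples $(r_{il},r_{jl},r_{kl},s_l)$ each contribute three edges $r_{\cdot l}s_l$. Adding these gives the figure $210b + 4{,}428$ claimed in the algorithm. For the second round I would similarly add the $3(35b+11)$ edges $q_{\cdot l}r_{\cdot l}$ that close the $35b+11$ basic components, one edge per $4$-cycle used to absorb the (constantly many) unused leaves into $4$-cycle \textit{ppg}'s, and the single edge of the residual triangle, yielding $105b + 2{,}187$. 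Each count is pure bookkeeping against the construction; the only substantive point — that the slack built into the anchors ($124$ spare leaves apiece, $35$ anchors rather than $19$) really leaves room for the greedy choices forced by Lemma~\ref{3pathlemma:label} and the balancing of Lemma~\ref{fixedpointlemma:label} — is exactly what the discussion preceding Theorem~\ref{3pathrigiditytheorem:label} already establishes.

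Summing the rounds gives $315b + 6{,}615$ queries for $n = 245b + 4{,}419$ points. Substituting $b = (n - 4{,}419)/245$ and simplifying, $315b + 6{,}615 = \tfrac{315}{245}(n-4{,}419) + 6{,}615 = \tfrac{9}{7}n - \tfrac{39{,}771}{7} + \tfrac{46{,}305}{7} = \tfrac{9n}{7} + \tfrac{6{,}534}{7}$, the stated bound. For an $n$ not of this form, write $n = 245b + 4{,}419 + t$ with $0 \le t \le 244$ and $b \ge 1$, run Algorithm~1 on $245b + 4{,}419$ of the points, and fix each of the remaining $t$ points by a triangle ($2$ queries each); this adds only $O(1)$ queries, so the asymptotic form $9n/7 + O(1)$ persists, and after loosening the additive constant appropriately the displayed inequality does too.

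The real obstacle lies not in this theorem but in the two results it rests on: Theorem~\ref{3pathrigiditytheorem:label}, which requires checking that the round-two greedy procedure can always meet the more than $130$ inequalities of Lemma~\ref{3pathlemma:label} (each excluded value ruling out at most two candidate edges, by Observation~\ref{obs:one}), and Lemma~\ref{fixedpointlemma:label}, which requires that $35$ anchors keep all valences within $2$ of one another. Granting those, what remains here is only to count the per-round totals consistently against exactly the construction described — in particular reading the anchor's leaf count as $3b+124$, in agreement with the $3b+124$ edges per anchor derived earlier — and then to perform the arithmetic above.
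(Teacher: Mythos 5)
Your proposal is correct and follows essentially the same route as the paper: tally the first-round queries ($210b+4{,}428$) and second-round queries ($105b+2{,}187$), sum to $315b+6{,}615$ for $n=245b+4{,}419$ points, and substitute to get $9n/7+6534/7$, with the rigidity itself delegated to Theorem~\ref{3pathrigiditytheorem:label}. You also correctly read the per-anchor leaf count as $3b+124$ (the paper's statement of Algorithm~1 misprints this as $b+124$, but only $3b+124$ is consistent with the totals), and your handling of $n$ not of the prescribed form is a harmless addition the paper leaves implicit.
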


\section{Lower Bound for Two Rounds} 
 
The argument here closely follows the adversarial argument given in the lower bound proof of 
\cite{DBLP:conf/wabi/ChinLSY07}. 
Let the set of edges queried in the first and second round be $E_1$ and $E_2$ respectively; 
$G_1 = (V,E_1)$ is the query graph for the first round, while $G_2 = (V,E_1\cup E_2)$ is the final query graph after the second round. The length of a maximal path of degree 2 nodes in a graph is the number of degree 2 nodes in the path. We shall call nodes of degree at least 3 as heavy nodes.
  
In the \emph{first round}, the adversary returns edge-lengths according to the following strategy, 
with the intention of keeping the linear layout of the \textit{ppg} ambiguous: 
  
\begin{description} 

\item{$S_1$:} The adversary fixes the layout of all nodes of degree 3 or more and returns the lengths of the edges incident on these nodes. 


\item{$S_2$:} For all degree 2 nodes, if one of the incident edges is also incident on a degree 1 node, the adversary sets the length of one of the incident edges to be the same, say $c$, over all these degree 2 nodes. 

\item{$S_3$:} For maximal paths formed by 2 or more degree 2 nodes, say $p_1, p_2, ..., p_k (k \geq 2)$, 
let $p_0$ and $p_{k+1}$ be non-degree 2 nodes adjacent to $p_1$ and $p_k$ respectively. The adversary sets $|p_{i-1}p_i| = |p_{i+1}p_{i+2}|$ for $i = 1$ (mod 3). 
In addition, if both $p_0$ and $p_{k+1}$ are of degree 3 or more the adversary sets $|p_{i}p_{i+1}| = |p_{i-1}p_{i+2}|$ for $i = 1$ (mod 3), and if at least one of them, say $p_{k+1}$, is of degree one the adversary sets the lengths of alternate edges equal.

\item{$S_4$:} If a node, say $p_0$, of degree 3 has 2 maximal paths of degree 2 or 1 nodes the other ends of which are not attached to any heavy node, and if the node $p_0$ is incident on only one maximal path of degree 2 node of length 1 of which the other end is incident on a heavy node, then set the length of one of the edges of this third path as $c$.


\end{description} 


For a maximal path of degree 2 nodes in $G_2$, as a consequence of $S_3$ there are limits on the maximum number of edges from $E_1$ if the path consists of edges from $E_1$ only (Fig.~\ref{deg2maxpath-bothEndHeavy:label} shows a degree 2 maximal path $p_1p_2p_3p_4p_5p_6$ in $G_1$ with both the end nodes $p_0$ and $p_7$ being heavy), and on the maximum number of consecutive edges from $E_1$ if it contains at least one edge from $E_2$ (Fig.~\ref{deg2maxpath-notBothEndHeavy:label} shows some degree 2 maximal paths in $G_1$ with none of the end nodes being heavy). If both of $p_0$ and $p_{k+1}$ are of degree at least three in the first round the adversary sets the above layout in such a way that if, for any $i$ with $i = 1$ (mod 3) and $i < k$, no edge is attached to either $p_i$ or $p_{i+1}$ in the second round their positions will be ambiguous. Thus, for this case the length of a maximal path of degree 2 nodes in $G_2$ containing only the edges in $E_1$ can be at most 3.
If at least one of $p_0$ and $p_{k+1}$, say $p_{k+1}$, is of degree one in the first round the adversary sets the above layout in such a way that if, for any $i$ with $i = 1$ (mod 2) and $i < k$, no edge is attached to either $p_i$ or $p_{i+1}$ in the second round, they can be made ambiguous by setting $|p_{i}p_{i+1}| = |p_{i-1}p_{i+2}|$ in the second round. 
Thus, for this case the length of a maximal path of degree 2 nodes in $G_2$ containing only the edges in $E_1$ can be at most 2.
If $p_{k+1}$ is of degree 1 and no edge is attached to either $p_{k-1}$ or $p_k$ in the second round the positions of $p_{k-1}$ and $p_k$ can be made ambiguous by setting $|p_{k-1}p_k| = |p_{k-2}p_{k+1}|$ in that round.
The algorithm must attach an edge in $G_2$ to $p_{k-1}$ or $p_k$. Still then there will be at most 2 free nodes at an end of a path of degree 2 nodes if the end node is of degree 1. The algorithm will fix them in the second round. Thus, in a maximal path of degree 2 nodes in $G_2$ that contains at least one edge from $E_2$ there can be at most 2 consecutive edges from $E_1$.

\begin{figure}[!h] 
\centering 
\includegraphics[scale=.7]{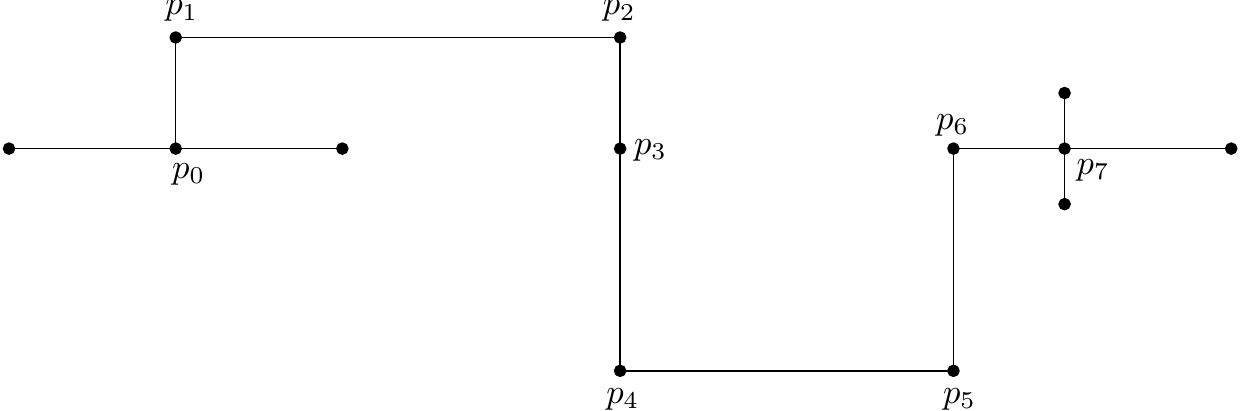} 
\caption{$p_1p_2p_3p_4p_5p_6$ is a maximal path of degree 2 nodes in $G_1$ with both the end nodes being heavy. In the second round, the algorithm has to introduce edges at $p_1$ or $p_2$ to make them unambiguous, and at $p_4$ or $p_5$ to make them unambiguous. This will reduce the length of the degree 2 maximal path in $G_2$.} 
\label{deg2maxpath-bothEndHeavy:label} 
\end{figure} 

\begin{figure}[!h] 
\centering 
\includegraphics[scale=.7]{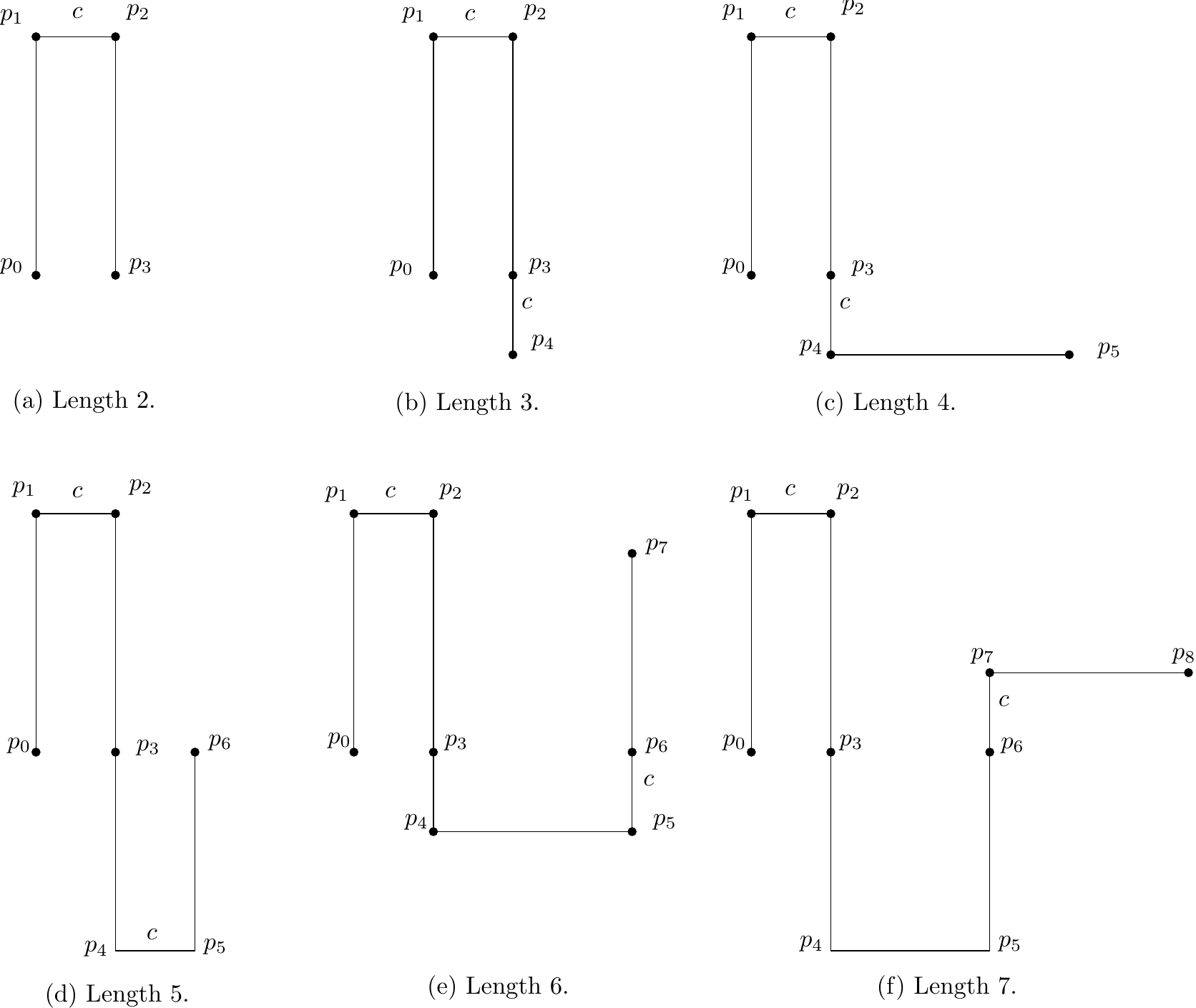} 
\caption{Some maximal paths of degree 2 nodes in $G_1$ with none of the end nodes being heavy. In the second round, the algorithm has to introduce an edge at $p_1$ or $p_2$ in all the figures (a)-(f) to make them unambiguous, and at $p_4$ or $p_5$ in figures (d)-(f) to make them unambiguous.This will reduce the lengths of the degree 2 maximal paths in $G_2$.} 
\label{deg2maxpath-notBothEndHeavy:label} 
\end{figure}

The above results together with $S_2$ and $S_3$ imply that the following property holds for the \textit{ppg}~\cite{am-anewalg}. 
  
\begin{lemma} \label{maxpathlemma:label}
The number of nodes in any maximal path of degree 2 nodes in $G_2$ is at most 3. 
\end{lemma}

\begin{theorem} 
The minimum density of any line rigid \textit{ppg} for any two round algorithm is at least $\frac{9}{8}$. 
\end{theorem}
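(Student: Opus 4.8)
The plan is to establish the lower bound by an adversary (counting) argument, following the structure of the proof in \cite{DBLP:conf/wabi/ChinLSY07} but exploiting the sharper structural restriction of Lemma~\ref{maxpathlemma:label}. The key quantity to bound is $|E_1| + |E_2|$ from below in terms of $n = |V|$, under the assumption that $G_2$ is line rigid against the adversary strategy $S_1$--$S_4$. First I would observe that, since $G_2$ is rigid, no layer-graph drawing exists (Theorem~\ref{lr=lgtheorem:label}), so the adversary's ambiguity threats must all be defused by edges of $E_2$; concretely, every maximal path of degree-2 nodes in $G_2$ has at most $3$ vertices by Lemma~\ref{maxpathlemma:label}, and every degree-1 node must have acquired further structure or be handled as in $S_2$. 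This already forces a certain minimum number of edges per vertex, but to get $9/8$ one needs to argue more carefully about which edges lie in $E_1$ versus $E_2$.

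The core of the argument is a discharging / amortized count. I would partition the vertices of $G_2$ into heavy nodes (degree $\ge 3$ in $G_2$) and the rest, and separately track, for each maximal degree-$\le 2$ path, how many of its edges come from $E_1$ (at most $3$ consecutive if the path touches a degree-1 end, at most $3$ total if both ends are heavy, by the analysis preceding Lemma~\ref{maxpathlemma:label}) versus $E_2$. The point is that $E_2$-edges are ``expensive'': the adversary's first-round strategy $S_3$ (equal alternate edge lengths) guarantees that every length-3 block of $E_1$-edges internal to a degree-2 path creates a potential ambiguity that the algorithm is obliged to kill with an $E_2$ edge in the second round. Assigning a charge of $1$ to each edge and redistributing, I would show the total charge is at least $\frac{9}{8}n - O(1)$. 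Equivalently: a vertex of degree exactly $2$ in $G_2$ that sits in the middle of a 3-path ``pays'' through the forced $E_2$ edge attached nearby; averaging the $E_2$-edge cost against the vertices whose ambiguity it resolves yields the $\frac{1}{8}$ surplus over the trivial density $1$ coming from connectivity, and the $9/7$-type construction of Algorithm~1 shows this bound is close to tight. A clean way to package this is: let $h$ be the number of heavy nodes and count edge-endpoints; rigidity plus Lemma~\ref{maxpathlemma:label} forces enough heavy nodes, and enough $E_2$ edges, that $2|E_2| \ge$ (a linear fraction of the degree-2 vertices), which combined with $|E_1| + |E_2| \ge n - 1$ gives the claim after optimizing the worst-case path structure.

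The main obstacle, as I see it, is handling the interaction between the degree-1 nodes (adversary strategy $S_2$, $S_4$) and the degree-2 paths without double-counting, and pinning down exactly which configuration is extremal. The adversary sets a common length $c$ on one pendant edge of every degree-2 node adjacent to a leaf, so such a node only needs \emph{one} disambiguating $E_2$ edge rather than forcing a whole 3-path to be broken up; the extremal graph will therefore be built from many short pieces (paths of 2 or 3 degree-2 vertices hanging between heavy nodes, plus controlled pendant structure) arranged to minimize edges subject to $S_1$--$S_4$ all being defeated. I would therefore set up a small linear program (or just a finite case analysis over the possible ``motifs'' — a degree-2 path of length $1,2,3$ between two heavy nodes, a pendant path of length $1$ or $2$, etc.), compute the edges-per-vertex ratio of each motif after the mandatory $E_2$ additions, and verify the minimum over all feasible mixtures is $9/8$. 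The case analysis is the tedious part but is bounded in size by Lemma~\ref{maxpathlemma:label}, and the rest follows by a straightforward global count; I would present the motif table and the optimization, and defer the routine verification that $S_1$--$S_4$ are mutually consistent (already sketched in the commented-out material) to the earlier discussion.
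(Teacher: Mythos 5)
Your overall strategy is the same as the paper's: a local averaging (discharging) argument in which the vertex set is partitioned into maximal degree-2 paths and neighbourhoods of heavy nodes, shared edges are split half-and-half between adjacent groups, and Lemma~\ref{maxpathlemma:label} bounds the length of each degree-2 path so that the per-group density can be minimized over a finite list of configurations. The paper's ``groups'' are exactly your ``motifs'', and the extremal cases it finds are a first-round degree-2 path with $4$ internal nodes between two heavy nodes (density $\frac{1}{4}(2\cdot\frac12+3+\frac12)=\frac98$) and a heavy node carrying three degree-2 paths with $6$ degree-2 nodes in total (density $1+\frac{1}{m+2}$ with $m=6$).

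The genuine gap is that you assert the motif optimization comes out to $\frac98$ without confronting the motifs that naively violate it. A heavy node with exactly three attached maximal degree-2 paths of lengths $(3,3,3)$, $(3,3,2)$ or $(3,3,1)$ is permitted by Lemma~\ref{maxpathlemma:label} and has local density $1+\frac{1}{m+2}$ with $m\in\{7,8,9\}$, i.e.\ strictly below $\frac98$. Your plan, as stated, would therefore yield a weaker bound (around $\frac{12}{11}$) unless you additionally show these configurations cannot occur in a rigid $G_2$. This is precisely where the paper must invoke $S_2$ and $S_4$: for the $(3,3,1)$ group the adversary has forced an edge of the common length $c$ on each of the three paths, and for the $(3,3,3)$ group the mod-3 equalities of $S_3$ leave an internal pair $p_i,p_{i+1}$ ambiguous on each path; either way the algorithm is obliged to spend further $E_2$ edges inside the group, which brings the effective count of degree-2 nodes back to $m\le 6$. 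You gesture at ``mandatory $E_2$ additions'' but do not identify which motifs trigger them or why, and without that step the claimed minimum of $\frac98$ does not follow. (A secondary, smaller issue: the inequality $|E_1|+|E_2|\ge n-1$ from connectivity plays no role in the real argument and cannot be combined with a bound on $|E_2|$ alone to reach $\frac98 n$; the bound has to come entirely from the per-group averages.)
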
 

\begin{proof} 
We determine the minimum of the average numbers of edges for all types of nodes. For this 
the nodes are categorized into two broad types:

\begin{enumerate} 
\item [A.] {\bf Nodes in the maximal paths of length at least 2 formed by degree 2 nodes in the first round where both the end nodes are attached by edges from $E_1$ to nodes of degree at least 3 in the first round:} For edges whose one end is incident on a node of this type and the other end is incident on a node of the other type  the edge is split into 2 equal halves. One half is counted towards the density of the nodes of this path and the other half is counted towards the other type of nodes.

For maximal path of length $k = 2$ the average density is $\frac{1}{2}(2\times\frac{1}{2}+1+\frac{1}{2}) = \frac{5}{4}>\frac{9}{8}$. For $k = 3$ the average is $\frac{1}{3}(2\times\frac{1}{2}+2+\frac{1}{2}) = \frac{7}{6} >\frac{9}{8}$. For $k = 4$ the average is $\frac{1}{4}(2\times\frac{1}{2}+3+\frac{1}{2}) =\frac{9}{8}$. For higher values of $k$ the average is $\frac{1}{k}[2\times\frac{1}{2}+k-1+ \lfloor\frac{k+1}{3}\rfloor\times\frac{1}{2}] > \frac{9}{8}$. Their minimum is $\frac{9}{8}$.
 
\item [B.] {\bf All the remaining Nodes:} To compute the minimum density of this type of nodes we group these nodes and their adjacent edges into neighbourhoods of heavy nodes in $G_2$ of this type and evaluate the average densities of these groups. Their minimum will be the minimum density for this type of nodes. 

There are 2 types of groups around the heavy nodes based on whether the heavy node is connected to a node of type A or a heavy node of type B by a maximal path of degree 2 nodes in $G_2$. Here, the path may have 0 number of degree 2 nodes for which the path will contain only one edge and no degree 2 nodes. If a path of degree 2 nodes or an edge is attached to two heavy nodes of type B the path or the edge is divided equally and each half is counted towards the density of one group. If the two ends of a path are attached to two types of nodes then by the accounting described for type A nodes all the nodes and edges of the path except for the half of the end edge attached to the node of type A are counted towards the density of the group of nodes of type B. We consider the two kinds of nodes of type B separately.

\begin{enumerate}
\item [(a)] {\bf Heavy nodes in $G_2$ that are connected to heavy nodes of type B only, by paths of degree 2 nodes in $G_2$:} 



Clearly, for a group of nodes around a heavy node the contribution of average density for the group from an attached path of degree 2 nodes decreases as the length of the path increases. By Lemma~\ref{maxpathlemma:label} the maximum length of a path of degree two nodes is 3. The minimum contribution from a path is $(\frac{3+1}{2})/\frac{3}{2} = \frac{4}{3} > \frac{9}{8}$. Thus, the path will not contribute to reduce the average density of a group around a heavy node to lower than $\frac{9}{8}$. So, we only consider the heavy nodes of this group each of which has the least number of degree two paths attached to the heavy node, i.e., which has exactly 3 paths of degree 2 nodes attached. 

Let the total number of nodes in the 3 paths is $m$. Then average density for the group around the heavy node is $d = \frac{\frac{1}{2}m + \frac{3}{2}}{\frac{1}{2}m + 1} = 1 + \frac{1}{m + 2} \geq \frac{9}{8}$ for $m \leq 6$. Thus, for the groups with paths having total number of degree 2 nodes at most 6 the minumum average density will be $\frac{9}{8}$. It remains to consider the groups with total number of degree 2 nodes 7, 8 and 9, since there can be at most 3 degree 2 nodes in a path by Lemma~\ref{maxpathlemma:label}.

For the group with 2 paths of length 3 and 1 path of length 1 placement will not be unique due to $S_2$ and $S_4$ (Fig.~\ref{331group:label}). For path $p_0p_1p_2$ either $|p_0p_1| = c$ or $|p_1p_2| = c$ by $S_4$. For path $p_0p'_1p'_2p'_3p'_4$, among the edges $p'_1p'_2$ and $p'_2p'_3$ the one in $E_1$ will have length $c$ by $S_2$. Similarly, for the path $p_0p''_1p''_2p''_3p''_4$ either $|p''_1p''_2| = c$ or $|p''_2p''_3| = c$. One more edge must be attached to make the points unique. Then total number of nodes for the paths at the heavy node $p_0$ will be at most 5 and the average density for the group will be at least $\frac{9}{8}$.

\begin{figure}[!h] 
\centering 
\includegraphics[scale=0.7]{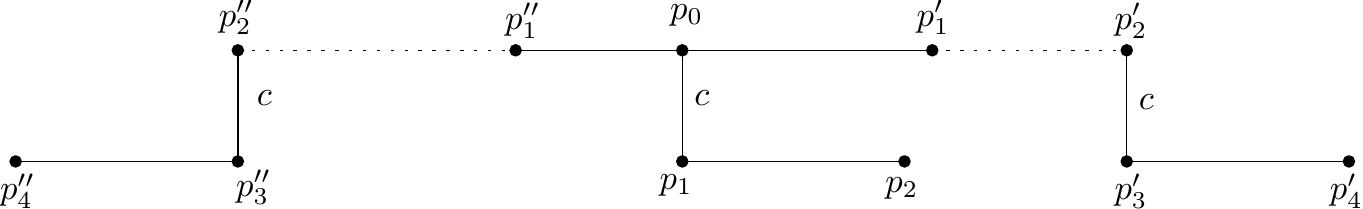}
\caption{Heavy node of group B(a) with 2 paths of degree 2 nodes of length 3 and 1 path of degree 2 node of length 1.} 
\label{331group:label} 
\end{figure}

Now we consider the group with all the 3 paths of length 3. For this case the placement will not be unique (Fig.~\ref{333group:label}). For Fig.~\ref{333group:label}a there must be an edge at $p_1$ or $p_2$ of the path $p_0p_1p_2p_3p_4$ to make $p_1$ and $p_2$  unambiguous. For Fig.~\ref{333group:label}b there must be an edge at $p_2$ or $p_3$ to make $p_2$ and $p_3$ unambiguous. Similarly, there must be an extra edge for each of the other 2 paths $p_0p'_1p'_2p'_3p'_4$ and $p_0p''_1p''_2p''_3p''_4$. Thus, the reduced group consists of 3 degree 2 paths of maximum length 2. There will be at most 6 degree 2 nodes in the degree 2 maximal paths at the heavy node $p_0$ and the average density for the group will be at least $\frac{9}{8}$. Similarly, for the group with 2 paths of length 3 and 1 path of length 2 we can show that there must be edges at the nodes of the paths that will make the total number of nodes in the degree 2 paths around the heavy node at most 6. So, the minimum average density for it will be $\frac{9}{8}$.

\begin{figure}[!h] 
\centering 
\includegraphics[scale=0.7]{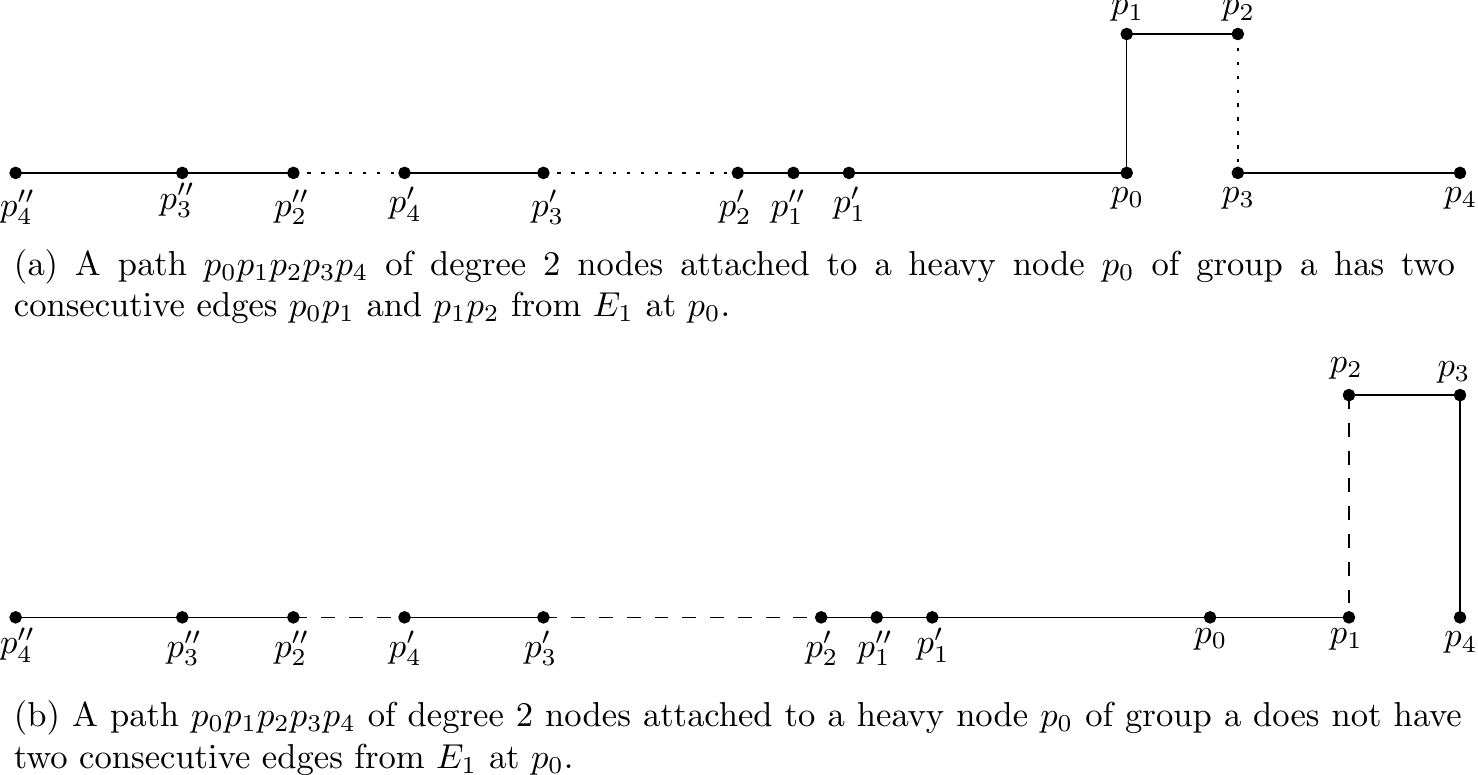} 
\caption{Heavy node of group B(a) with 3 paths of degree 2 nodes of length 3.} 
\label{333group:label} 
\end{figure} 


It can be easily seen that for groups with other degree 2 paths attached the minimum average density will be $\frac{9}{8}$. Thus, the minimum of the averages for this type of groups is $\frac{9}{8}$.

\item [(b)] {\bf Heavy nodes in $G_2$ that are connected to at least one node of type A by at least one path of degree 2 nodes in $G_2$:} It is shown above that a maximal path of degree 2 nodes attached to 2 heavy nodes of type B does not contribute to reduce the average density of a group to lower than $\frac{9}{8}$. 
A heavy node of type B in $G_2$ can be connected to a node of type A by a maximal path of degree 2 nodes in $G_2$ in two ways based on whether the edge of the path incident on the type A node is in $E_1$ or $E_2$. If the edge is in $E_1$ the length of the degree 2 path is 0, because type A nodes are found only in the maximal paths of degree 2 nodes in $G_1$ where each end of a path is connected to a heavy node of type B in $G_1$ by an edge from $E_1$. For this case one half of each end edge is counted towards the density of its adjacent node of type B. This path will not contribute to reduce the density of the corresponding neighbourhood of type B nodes. 

For the second case the maximum length of the maximal path of degree 2 nodes in $G_2$ is 2 since one end of the maximal path is connected to a heavy node in $G_2$ by an edge from $E_2$ and since there can be at most 1 edge from $E_2$ and at most 2 consecutive edges from $E_1$ in a maximal path of degree 2 nodes in $G_2$ containing edges from $E_1$ and $E_2$. The minimum average density of the nodes of this path is $\frac{1}{2}(2+\frac{1}{2})=\frac{5}{4}>\frac{9}{8}$. Also this path  will not contribute to reduce the density of its corresponding neighbourhood of type B nodes to lower than $\frac{9}{8}$. 


So, we consider the heavy nodes of this group each of which has exactly 3 paths of degree 2 nodes in $G_2$.
If the group of nodes around a heavy node of type b has 2 degree 2 paths of length 3 attached to heavy nodes of type B and 1 path of degree 2 nodes attached to a heavy node of type A by an edge from $E_2$ then each of the 3 paths will have an edge from $E_2$. In a way similar to the case of group a nodes consisting of 3 paths of degree 2 nodes (Fig.~\ref{333group:label}) it can be shown that the reduced group will have density of at least $\frac{9}{8}$.
For the group with 2 paths of length 3 being attached to heavy node of type B and the third path of length 0 being attached to node of type A by an edge from $E_1$ the average density is $\frac{1}{4}(4+\frac{1}{2}) = \frac{9}{8}$.
It can be easily checked that for all other combinations of the maximal paths the minimum average density for the groups of nodes will be at least $\frac{9}{8}$. 



\end{enumerate} 
\end{enumerate}

Thus, the minimum average density for all nodes in $G_2$ will be $\frac{9}{8}$. \qed 
\end{proof}

\bibliographystyle{abbrv}
\bibliography{pointPlacement}

\newpage
\appendix
\section{}

\begin{lemma} \label{repconp1p2neq2r2lemma:label}  
The 7-cycle $(p_1, q_1, r_1, s, r_2, q_2, p_2)$ of the 3-path basic component of Fig.~\ref{3pathfig:label} cannot be drawn as the layer graph 
of Fig.~\ref{configa:label} if the edges of the component satisfy the following conditions:  

\begin{multline}\label{eqn:repconlemma}
\{|p_1p_3| \ne ||p_3q_3| \pm |r_3s||, |p_1p_3| \ne |r_3s||, ||p_3q_3| \pm |sr_2||\ne |p_2q_2|,\\
||p_3q_3| \pm |sr_2| \pm |sr_3||\ne |p_2q_2|\}  
\end{multline}



\end{lemma}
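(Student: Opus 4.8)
The plan is to argue by contradiction, in the style of the argument sketched earlier. Suppose the 7-cycle $(p_1,q_1,r_1,s,r_2,q_2,p_2)$ \emph{can} be drawn as the layer graph of Fig.~\ref{configa:label}. By Theorem~\ref{lr=lgtheorem:label} the 7-cycle is then not line rigid, so neither is the whole 3-path component, and hence, again by Theorem~\ref{lr=lgtheorem:label}, the component admits a layer graph drawing; such a drawing restricts to the assumed drawing of the 7-cycle on $\{p_1,q_1,r_1,s,r_2,q_2,p_2\}$. Since $p_1,p_2,p_3$ are fixed in the first round, the distances $|p_1p_2|,|p_2p_3|,|p_3p_1|$ are known and these three points are collinear, so in any layer graph drawing of the component they occupy a common axis-parallel line; in particular $|p_1p_3| = ||p_1p_2| \pm |p_2p_3||$. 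Because $p_1$ and $s$ lie on the same side of the layer graph in Fig.~\ref{configa:label}, the chain $p_3q_3r_3s$ attached at $s$ can be completed in exactly the four ways shown in Fig.~\ref{configaall4fig:label}(a)--(d), so the component drawing must be one of those four.

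First I would dispose of each sub-configuration by reading off the collinearities it forces and substituting them into the layer-graph obstruction of the 7-cycle. In Fig.~\ref{configaall4fig:label}(a), $p_1,q_1,r_1,s,r_2$ are collinear with $p_2q_2$ and $q_3r_3$, forcing $|q_2r_2| = ||p_2p_3| \pm |p_3q_3| \pm |r_3s||$; combined with the equality $|p_1p_2| = |q_2r_2|$ required to draw the 7-cycle as in Fig.~\ref{configa:label} and with $|p_1p_3| = ||p_1p_2| \pm |p_2p_3||$, this gives $|p_1p_3| = ||p_3q_3| \pm |r_3s||$, contradicting the first inequality of~(\ref{eqn:repconlemma}). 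In Fig.~\ref{configaall4fig:label}(b) the same reasoning with $|q_2r_2| = ||p_2p_3| \pm |r_3s||$ yields $|p_1p_3| = |r_3s|$, contradicting the second. In Fig.~\ref{configaall4fig:label}(c) the points $p_1,p_2,p_3$ lie on a line parallel to the line through $q_3,r_3,s$, forcing $||p_1q_1| \pm |q_1r_1| \pm |r_1s|| = |p_3q_3|$, while the relevant obstruction for the 7-cycle in this configuration is $||p_1q_1| \pm |q_1r_1| \pm |r_1s| \pm |sr_2|| = |p_2q_2|$, which therefore becomes $||p_3q_3| \pm |sr_2|| = |p_2q_2|$, contradicting the third. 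Finally, in Fig.~\ref{configaall4fig:label}(d) the line through $p_1,p_2,p_3$ is parallel to $q_3r_3$, forcing $||p_1q_1| \pm |q_1r_1| \pm |r_1s| \pm |sr_3|| = |p_3q_3|$, so the same obstruction $||p_1q_1| \pm |q_1r_1| \pm |r_1s| \pm |sr_2|| = |p_2q_2|$ becomes $||p_3q_3| \pm |sr_3| \pm |sr_2|| = |p_2q_2|$, contradicting the fourth.

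Each of the four admissible layer graph drawings of the component is thus excluded by one of the four hypotheses of~(\ref{eqn:repconlemma}). Hence the component has no layer graph drawing whose restriction to the 7-cycle is Fig.~\ref{configa:label}, and in particular the 7-cycle cannot be drawn as the layer graph of Fig.~\ref{configa:label}, which is the assertion.

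I expect the main obstacle to be justifying the two structural claims used above: that Fig.~\ref{configaall4fig:label}(a)--(d) is an \emph{exhaustive} list of the layer graph drawings of the component extending the Fig.~\ref{configa:label} drawing of the 7-cycle (this needs the axis-parallelism and fold constraints P1--P4 together with the fact that $p_1$ and $s$ are on the same side), and, for the cases (c) and (d), correctly pinning down both the collinearity relation forced by the drawing and the consistent $\pm$-sign bookkeeping when that relation is substituted into the 7-cycle obstruction. Everything else is the routine use of $|p_1p_3| = ||p_1p_2| \pm |p_2p_3||$.
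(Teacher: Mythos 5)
Your proposal is correct and follows essentially the same route as the paper's own proof: contradiction via Theorem~\ref{lr=lgtheorem:label}, reduction to the four extensions in Fig.~\ref{configaall4fig:label}, and elimination of each by one of the four inequalities. In fact you are slightly more thorough than the paper's appendix, which works out only configuration (a) explicitly and dismisses (b)--(d) ``without loss of generality,'' whereas you carry out all four substitutions (matching the derivations in the paper's main text).
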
  

\begin{proof}
We prove by contradiction. Assume that the edges of the component (Fig.~\ref{3pathfig:label}) satisfy (\ref{eqn:repconlemma}) but the 7-cycle $(p_1, q_1, r_1, s, r_2, q_2, p_2)$ can be drawn as a layer graph as in Fig.~\ref{configa:label}. Then by Theorem 1 it is not line rigid. Clearly, a set of points cannot be line rigid if any non-empty subset is not line rigid and again by Theorem 1 must have a layer graph representation. Thus, the whole component must have a layer graph drawing. \\  
  
All possible layer graph drawings  of the component in which the layer graph of Fig.~\ref{configa:label} is embedded are as in Fig.~\ref{configaall4fig:label}. This implies that if the 7-cycle $(p_1, q_1, r_1, s, r_2, q_2, p_2)$ of the component of Fig.~\ref{3pathfig:label} has a layer graph representation as in Fig.~\ref{configa:label} then the whole component must have at least one of the 4 layer graph representations as shown in Fig.~\ref{configaall4fig:label}. \\ 
  
Without loss of generality we assume that the component has a layer graph representation of Fig.~\ref{configaall4fig:label}(a). In the present configuration of the layer graph of the component $p_1, q_1, r_1$ and $s$ are on a line which is parallel to $p_2q_2$ and $q_3r_3$. So, we must have $|p_1p_3| = ||p_3q_3| \pm |r_3s||$.\\    
 
This contradicts the first inequality of (\ref{eqn:repconlemma}) which corresponds to the layer graph of the component in the present  
configuration. Hence, the whole component cannot be drawn as a layer graph when its 7-cycle $(p_1, q_1, r_1, s, r_2, q_2, p_2)$ has a layer graph representation in the configuration of Fig.~\ref{configa:label}. In other words, in any layer graph representation (if any one is possible) of the whole component its 7-cycle $(p_1, q_1, r_1, s, r_2, q_2, p_2)$ cannot have a layer graph representation in the configuration of Fig.~\ref{configa:label}. Consequently, the 7-cycle $(p_1, q_1, r_1, s, r_2, q_2, p_2)$ of the component cannot be drawn as a layer graph in the configuration of Fig.~\ref{configa:label}. 
\qed
\end{proof}

\newpage
\section{}

\begin{lemma} \label{fixedpointlemma:label}
A set $S$ of 35 points is sufficient to ensure that the valences of any two points in $S$ differ by at most 2. 
\end{lemma} 

\begin{proof}

Initially, we have $|S_0| = 35$ and all other $S_i$'s are of size 0. After attaching the first 6 basic components we have $|S_0| = 17$ and $|S_1| = 18$. Now we attach components until $|S_0| \leq 9$. This will attach at most 4 components. We have $|S_0| \leq 9$ and $|S_2| \leq 4$. The rest points are of valence 1, i.e., $|S_1| \geq 22$.

Next we attach components until $|S_0| \leq 2$. This will attach at most 7 components. Then we have $|S_0| \leq 2$ and  $|S_2| \leq 18$, and consequently, $|S_1| \geq 15$. Again, we attach components until $|S_0| = 0$. This will attach at most 2 components. Then we have $|S_0| = 0$ and $|S_3| \leq 2$, and consequently,  $|S_1 \cup S_2| \geq 33$. Thus, all the points in $S$ have at most 3 consecutive valences, viz., 1, 2 and 3. At any point of time they may have at most 4 consecutive valences, viz., 0-3. 



We shall show that at any point of time the points in $S$ will have at most 4 consecutive valences, and that at some point of time they will have at most 3 consecutive valences only. For this we use induction to show that if we start with points in $S$ in 3 consecutiv valences $d$, $d+1$ and $d+2$, and attach the basic components according to our algorithm, then at some point of time they will have the next 3 valences $d+1$, $d+2$ and $d+3$ only. We assume that $|S_{d} \cup S_{d+1}| \leq 18$. Otherwise, we attach components until $|S_{d} \cup S_{d+1}| \leq 18$.

First, we consider the cases for which $|S_d| \leq 9$. Then $|S_{d+1} \cup S_{d+2}| \geq 26$ with $|S_d \cup S_{d+1} \cup S_{d+2}| = 35$. We attach components until $|S_d| = 0$. For each new component, at least 1 point of $S_d$ will be moved to $S_{d+1}$, and at most 2 points of $S_{d+2}$ will be moved to $S_{d+3}$. It is clear that at most 9 components will be attached, and that there will always be at least 19 points in $S_d \cup S_{d+1} \cup S_{d+2}$ until there is no point in $S_d$. We have $|S_d| = 0$, $|S_{d+1} \cup S_{d+2}| \geq 17$ and $|S_{d+3}| \leq 18$. Thus, the valences of all the points will become $d+1$, $d+2$ and $d+3$. 


Now we consider the worst case for which $|S_d| = 18$ and $|S_{d+1}| = 0$. They imply that $|S_{d+2}| = 17$. We attach components until $|S_d| \leq 10$. At most 4 components will be attached. We group all the possible situations into 2 subcases. First, we consider the subcase when 2 points are used from $S_d$ for each new component. Exactly 4 components will be attached using 8 points from $S_d$. We have $|S_d| =10$ and  $|S_{d+1}| \geq 5$ with $|S_d \cup S_{d+1}| \geq 15$, and $|S_{d+3}| \leq 4$. After attachment of 1 more component we have $|S_d| \leq 8$ and $|S_{d+1}| \geq 6$ with $|S_d \cup S_{d+1}| \geq 14$, and  $|S_{d+3}| \leq 5$. Now we attach components until $|S_d| \leq 5$. Clearly, at most 3 components will be attached, and we have $|S_d| \leq 5$ and $|S_{d+1}| \geq 3$ with $|S_d \cup S_{d+1}| \geq 8$ (because at most 6 valence $d+1$ points will be raised to valence $d+2$ points), and  $|S_{d+3}| \leq 8$ (because at most 3 valence $d+2$ points will be raised to valence $d+3$ points). As long as there are at least 19 points in $S_d \cup S_{d+1} \cup S_{d+2}$, all the 3 points of a new component will be chosen from that union. No points will be used from $S_{d+3}$, and hence no point's valence will be raised to $d+4$. We attach components until $|S_d| = 0$. It is evident that at most 5 components will be attached, and we have $|S_d| = 0$, $|S_{d+1} \cup S_{d+2}| \geq 17$ and $|S_{d+3}| \leq 18$.  

Now we consider the other subcase which consists of the remaining possible situations. For this case, 3 or 4 components will be attached. It can be easily seen that $|S_d| \leq 9$ and  $|S_{d+1}| \geq 6$ with $|S_d \cup S_{d+1}| \geq 15$, and $|S_{d+3}| \leq 3$. We attach compnents until $|S_d| \leq 6$. It can be easily checked that at most 3 components will be attached, and we have $|S_d|\leq 6$ and  $|S_{d+1}| \geq 3$ with $|S_d \cup S_{d+1}| \geq 9$, and $|S_{d+3}| \leq 6$. We attach components until $|S_d| = 0$.It is evident that at most 6 components will be attached, and we have $|S_d| = 0$, $|S_{d+1} \cup S_{d+2}| \geq 17$ and $|S_{d+3}| \leq 18$.  



It can be easily shown that for all the other combinations of number of points in valences $d$ and $d+1$ subject to a maximum of 18, all the points will be elevated to at most 3 consecutive valences $d+1$, $d+2$ and $d+3$. The calculations will be similar to the above.\qed


\end{proof}

\newpage
\section{}

\begin{lemma} \label{maxpathlemma:label}
The number of nodes in any maximal path of degree 2 nodes in $G_2$ is at most 3. 
\end{lemma} 

\begin{proof}

If a maximal path of degree 2 nodes of $G_2$ consists of edges from $E_1$ only then by Step 3 of adversary its lenght is at most 3.

Now we consider maximal path of degree 2 nodes of $G_2$ that contains at least one edge from $E_2$. In such a path
 there cananot be three consecutive edges from $E_1$ because of $S_3$. Suppose the number of degree 2 nodes in a maximal path is 4. Let the nodes be $p_1, p_2, p_3$ and $p_4$. Let $p_0$ and $p_5$ be heavy nodes adjacent to $p_1$ and $p_4$ respectively. Since any maximal path of degree 2 in $G_2$ can have at most 2 consecutive edges from $E_1$ we can have the following 5 combinations of the $E_1$ and $E_2$ type edges for the edges $p_0p_1$, $p_1p_2$, $p_2p_3$, $p_3p_4$ and $p_4p_5$:\\


\begin{enumerate}   
\item $E_2$, $E_1$, $E_2$, $E_1$, $E_1$   
\item $E_2$, $E_1$, $E_1$, $E_2$, $E_1$   
\item $E_1$, $E_2$, $E_1$, $E_2$, $E_1$   
\item $E_1$, $E_1$, $E_2$, $E_2$, $E_1$   
\item $E_1$, $E_1$, $E_2$, $E_1$, $E_1$   
\end{enumerate}   

For combination 1, since there are two edges in $E_2$ lengths of those edges can be set in such a way that  
$|p_0p_5| = |p_1p_2| + |p_2p_3|$ and $|p_0p_1| = |p_4p_5| - |p_3p_4|$, and the graph $G_2$ becomes non-rigid (Fig.~\ref{deg2maxpath-case1label}). 

\begin{figure}[ht]   
\begin{minipage}[b]{0.5\linewidth}   
\centering   
\includegraphics[scale=0.7]{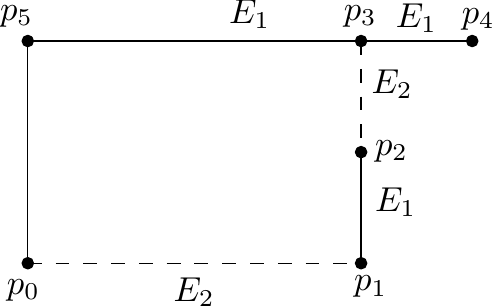}   
\caption{Maximal path of degree 2 in $G_2$ for the combination of edges $E_2, E_1, E_2, E_1, E_1$.}   
\label{deg2maxpath-case1label}   
\end{minipage}   
\hspace{0.2cm}   
\begin{minipage}[b]{0.45\linewidth}   
\centering   
\includegraphics[scale=0.7]{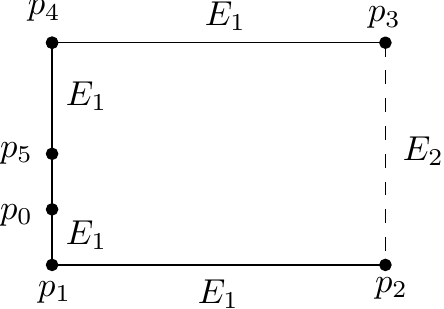}   
\caption{Maximal path of degree 2 in $G_2$ for the combination of edges $E_1, E_1, E_2, E_1, E_1$.}   
\label{deg2maxpath-case5label}   
\end{minipage}   
\end{figure} 

Similarly, for combinations 2-4 the adversary can make the graph ambiguous. As for combination 5, the adversary can set  
$|p_1p_2| = |p_3p_4| = c$ in the first round by $S_2$ and can set the length of $p_2p_3$ in round 2 in such a way  
that $|p_2p_3| = |p_4p_5| + |p_5p_0| + |p_0p_1|$ (Fig.~\ref{deg2maxpath-case5label}). Then the cycle $(p_0, p_1, p_2, p_3, p_4, p_5)$ will not be line rigid .\qed

\end{proof} 

\end{document}